\documentclass[sigconf]{acmart}
\AtBeginDocument{%
  \providecommand\BibTeX{{%
    \normalfont B\kern-0.5em{\scshape i\kern-0.25em b}\kern-0.8em\TeX}}}


\copyrightyear{2020}
\acmYear{2020}
\acmConference[WWW '20]{Proceedings of The Web Conference 2020}{April 20--24, 2020}{Taipei, Taiwan}
\acmBooktitle{Proceedings of The Web Conference 2020 (WWW '20), April 20--24, 2020, Taipei, Taiwan}
\acmPrice{}





\usepackage{subfigure}
\usepackage[noend]{algorithmic}
\usepackage[ruled,linesnumbered,vlined]{algorithm2e}
\usepackage{enumitem}
\usepackage{multirow}
\usepackage{footmisc}
\usepackage{url}
\usepackage{hyperref}
\usepackage[hyphenbreaks]{breakurl}

\DeclareMathOperator*{\argmax}{arg\,max}

\newcommand{\overbar}[1]{\mkern 1.4mu\overline{\mkern-1.4mu#1\mkern-1.4mu}\mkern 1.4mu}

\newlength\myindent
\setlength\myindent{2em}



\newtheorem{definition}{Definition}
\newtheorem{theorem}{Theorem}
\newtheorem{lemma}{Lemma}
\newtheorem{example}{Example}
\newtheorem{corollary}{Corollary}

\begin{document}
\title{Efficient Algorithms towards Network Intervention}

\author{Hui-Ju Hung}
\affiliation{
 \institution{The Pennsylvania State Univ., USA}
}
\email{hzh131@cse.psu.edu}

\author{Wang-Chien Lee}
\affiliation{
 \institution{The Pennsylvania State Univ., USA}
}
\email{wlee@cse.psu.edu}

\author{De-Nian Yang}
\affiliation{
 \institution{Academia Sinica, Taiwan}
}
\email{dnyang@iis.sinica.edu.tw}

\author{Chih-Ya Shen}
\affiliation{
 \institution{National Tsing Hua Univ., Taiwan}
}
\email{chihya@cs.nthu.edu.tw}

\author{Zhen Lei}
\affiliation{
 \institution{The Pennsylvania State Univ., USA}
}
\email{zlei@psu.edu}

\author{Sy-Miin Chow}
\affiliation{
 \institution{The Pennsylvania State Univ., USA}
}
\email{quc16@psu.edu}

\renewcommand{\shortauthors}{H.-J Hung et al.}

\begin{abstract}
Research suggests that social relationships have substantial impacts on individuals' health outcomes. Network intervention, through careful planning, can assist a network of users to build healthy relationships.
However, most previous work is not designed to assist such planning by carefully examining and improving multiple network characteristics.
In this paper, we propose and evaluate algorithms that facilitate network intervention planning through simultaneous optimization of network \textit{degree}, \textit{closeness}, \textit{betweenness}, and \textit{local clustering coefficient}, under scenarios involving \textit{Network Intervention with Limited Degradation - for Single target (NILD-S)} and \textit{Network Intervention with Limited Degradation - for Multiple targets (NILD-M)}. We prove that NILD-S and NILD-M are NP-hard and cannot be approximated within any ratio in polynomial time unless P=NP. 
We propose the \textit{Candidate Re-selection with Preserved Dependency (CRPD)} algorithm for NILD-S, and the \textit{Objective-aware Intervention edge Selection and Adjustment (OISA)} algorithm for NILD-M. Various 
pruning strategies are designed to boost the efficiency of the proposed algorithms.
Extensive experiments on various real social networks collected from public schools and Web and an empirical study are conducted to show that CRPD and OISA outperform the baselines in both efficiency and effectiveness.
\end{abstract}
%
%
\begin{CCSXML}
<ccs2012>
<concept>
<concept_id>10010405.10010455.10010459</concept_id>
<concept_desc>Applied computing~Psychology</concept_desc>
<concept_significance>500</concept_significance>
</concept>
<concept>
<concept_id>10003120.10003130.10003131.10003292</concept_id>
<concept_desc>Human-centered computing~Social networks</concept_desc>
<concept_significance>300</concept_significance>
</concept>
</ccs2012>
\end{CCSXML}

\keywords{Network intervention, optimization algorithms, social networks}

\maketitle

\section{Introduction}
\label{sec:introduction}

Previous studies have shown the importance and strengths of social relationships in influencing individual behaviors. Strong social relationships have been shown to facilitate the dissemination of information, encourage innovations, and promote positive behavior~\cite{valente2012science}.
Also, social relationships surrounding individuals have substantial impacts on individuals' mental and physical health~\cite{dhand2016social,house1988social}. For example, studies in Science and American Sociological Review indicate that socially isolated individuals are more inclined to have mental health problems and physical diseases, ranging from psychiatric disorders to tuberculosis, suicide, and accidents~\cite{house1988social,durkheim1951suicide}.

To alleviate the problems that arise with social isolation, two classes of intervention strategies may be adopted, including: 1) \textit{personal intervention}, which guides individuals to understand their situations, attitudes, and capacities through counseling~\cite{cohen2000social}; and 2) \textit{network intervention}, which emphasizes the need to strengthen individuals' social networks to accelerate behavior changes that can lead to desirable outcomes at the individual, community, or organizational level~\cite{house1988social, dhand2016social}. As an example, network intervention that helps establish new social links is crucial for individuals with autism spectrum disorders~\cite{chang2016systematic}. Those new links may be effectively introduced by curative groupings and network meetings --- events which encourage them to socialize more frequently~\cite{erickson1984framework, de2011physical}.

In order for network intervention to improve individual health outcomes, it is crucial to add social links in ways that promote network characteristics found to be related to positive outcomes. Thus, an important question to ask is: given an individual, what properties define the strength of the individual's network? Previous studies point out several possibilities: 1) \textit{Degree} indicates the number of established friendships. An individual with a large degree is more popular and has more opportunities to establish self-identity and social skills~\cite{ueno2005effects}. Thus, a large-degree individual is less inclined to be socially-isolated and have mental health problems~\cite{durkheim1951suicide}.
2) \textit{Closeness} indicates the inverse of the average social distance from an individual to all others in the network. An individual with great closeness is typically located in the center of the network and tend to perceive a lower level of stress~\cite{howell2014happy}.
3) \textit{Betweenness} indicates the tendency of an individual to fall on the shortest path between pairs of other individuals. An individual with large betweenness tends to occupy \textit{brokerage} positions in the network and have more knowledge of events happening in the network. Thus, an individual with large betweenness may also perceive social relationships more accurately~\cite{lee2017adolescents}.
4) Finally, \textit{Local clustering coefficient (LCC)} 
indicates the diversity of relationships within an individual's ego network \cite{dhand2016social}.
Specifically, an individual with a small LCC tends to have diverse relationships since her friends are less likely to be acquainted with each other~\cite{dhand2016social}. The perspective that individuals with small LCCs are less likely to have mental health problems has been postulated in the functional specificity theory, which advocates the need for having different support groups for distinct functions e.g., by obtaining \textit{attachment} from families or friends, \textit{social integration} from social activity groups, and \textit{guidance} from colleagues~\cite{weiss1974provisions}. In other words, individuals with large LCCs tend not to build a diverse social network by putting all their relationship eggs in a few baskets, and tend to have depressive symptoms and neurological illnesses~\cite{dhand2016social}. 
This relationship has been validated in studies with participants across various cultures and ages.  In a study involving 173 retired US elders, higher LCC is found to be associated with lower life satisfaction, self-esteem, happiness, and higher depression~\cite{yuan2017investigating}. 
In another study involving 2844 high school students, higher LCC again is associated with lower self-esteem~\cite{typ}.

In practice, specialists and practitioners may not have the time or resources to provide frequent and ongoing relationship recommendations to every individual. Also, the recommendations made by persons are susceptible to their subjective biases. As such, supplemental, objective information from automated network planning algorithms that can simultaneously optimize multiple network characteristics are helpful and valuable. This paper aims to develop novel algorithms that recommend suitable intervention links based on multiple potentially health-enhancing network characteristics. However, adding social links is not always straightforward for network characteristics. Of the characteristics noted earlier, the degree for each individual can be improved by adding more edges, in which case the closeness and betweenness of the network are also enhanced. However, improving the LCC is more challenging as the LCCs of individuals and nearby friends may not always improve, and they can even deteriorate, when more edges are added.

\begin{figure}[t]
	\centering
	\subfigure[Before] {\includegraphics[width =.99 in]{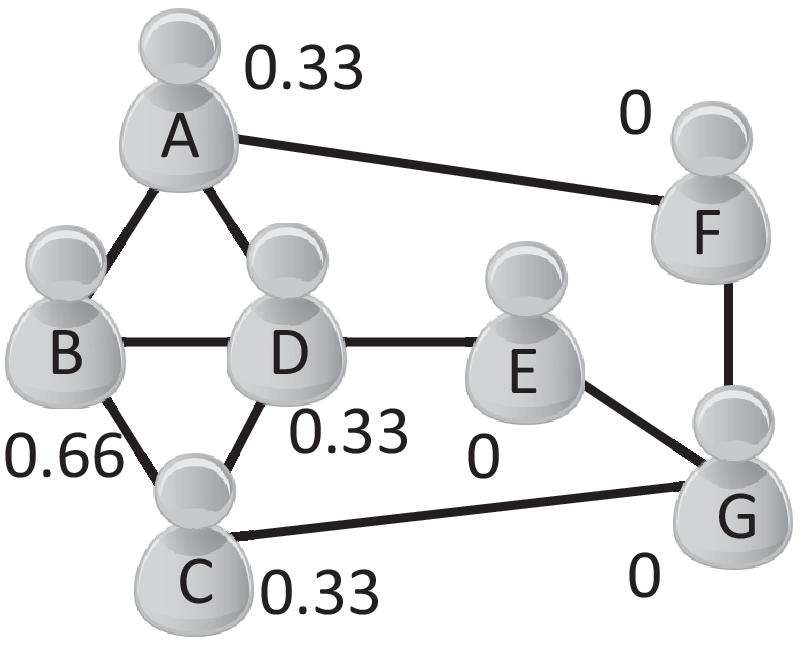}}
	\hspace{25pt}
	\subfigure[After] {\includegraphics[width =.99 in]{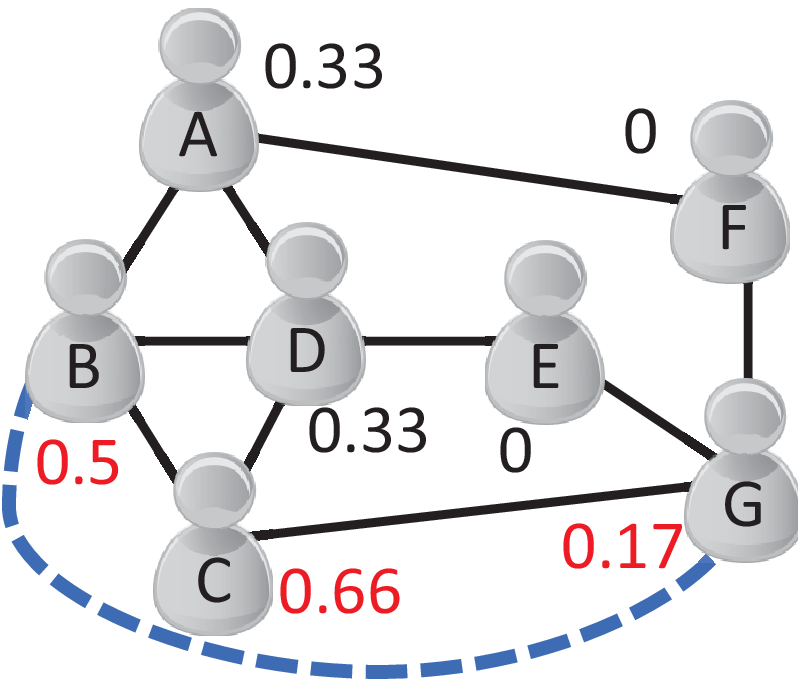}}
	\vspace{-4pt}
	\caption{An illustration on challenges of intervention}
	\label{fig:ego_network}
\end{figure}

Selecting good intervention links based on the LCC is further deterred by other challenges. A new link established for a targeted individual may increase the LCCs of her friends, when those new friends are acquainted with each other. Figure~\ref{fig:ego_network} presents an example showing the side-effects of adding improper new links on the LCC. 
Figure~\ref{fig:ego_network}(a) shows a social network in which the nodes are annotated by their initial LCCs, and Figure~\ref{fig:ego_network}(b) presents the network after adding an edge from B to G. The LCC of B is effectively reduced to 0.5, but the LCC of C unfortunately grows to 0.66. The example shows that heuristic or uninformed selection of intervention links by specialists and practitioners may lead to undesirable changes in the LCC; even worse, the undesirable changes may happen to many nearby individuals when the network size is large. Moreover, even though a simple way to decrease LCC is to remove some existing social links, this approach is \textit{not} considered because removal of existing social links undermines established social support \cite{thoits1983multiple}.

In this paper, we propose and test several algorithms that can simultaneously optimize network LCC, closeness, betweenness, and degree.
We first formulate a new problem, namely, \textit{Network Intervention with Limited Degradation - for Single target (NILD-S)}.
Given a budget of $k$, a threshold $\tau$, and a target $t$, NILD-S finds the set $F$ of $k$ intervention edges to minimize the LCC of $t$, such that the side effect, in terms of increment in anyone's LCC, cannot exceed $\tau$. 
NILD-S also ensures that the degree, betweenness and closeness of $t$ exceed given thresholds. 
We propose \textit{Candidate Re-selection with Preserved Dependency} (CRPD) algorithm, which first obtains an initial solution by extracting the individuals with the smallest degrees, and improves the initial solution by re-examining the candidates filtered out by nodes involved in the solution. Note that CRPD selects edges according to multiple characteristics.
We prove that NILD-S is NP-hard and cannot be approximated within any ratio in polynomial time unless P=NP. 
Nevertheless, we prove that CRPD can find the optimal solution for \textit{threshold graphs}, which are very similar to many well-known online social networks regarding many measurements like degree distribution, diameter and clustering coefficient~\cite{masuda2004analysis, vernitski2012astral, saha2014intergroup}.

Finally, we seek to extend the NILD-S to simultaneously improve the LCCs of multiple individuals while ensuring other network characteristics, including betweenness, closeness, and degree. To do so, we formulate the \textit{Network Intervention with Limited Degradation - for Multiple targets (NILD-M)} problem to jointly minimize the LCCs of multiple targets.  
Given the aforementioned $k$, $\tau$, and the set of targets $T$, NILD-M finds the set $F$ of $k$ intervention edges such that the maximal LCC of individuals in $T$ is minimized, while the LCC increment of any person does not exceed $\tau$. NILD-M also ensures that the degree, betweenness, and closeness of all targetes exceed their minimum thresholds.
We prove that NILD-M is NP-hard and cannot be approximated within any ratio in polynomial time unless P=NP. 
To solve NILD-M, we design \textit{Objective-aware Intervention edge Selection and Adjustment (OISA)}, which 1) carefully examines both the LCC of each terminal and the network structure to ensure the constraint of $\tau$, 2) explores the idea of \textit{optionality} to improve the solution quality, and 3) derives the lower bound on the number of required edges and the LCC upper bounds to effectively reduce computational time. 
Also, we evaluate OISA via an empirical study on four psychological outcomes, \textit{anxiety}, \textit{perceived stress}, \textit{positive and negative emotions}, and \textit{psychological well-being}.

The contributions are summarized as follows.
\vspace{-10pt}
\begin{itemize}
    \item Previous research has suggested the use of network intervention in improving health outcomes. With the potential to increase the support network by new acquaintances, however, there is no effective planning tool for practitioners to select suitable intervention edges. We formulate NILD-S and NILD-M to address this critical need for identifying suitable intervention links for a single target and a group of targets, while considering multiple network characteristics.
    
    \item We prove that NILD-S is NP-hard and cannot be approximated within any ratio in polynomial time unless P=NP. We propose CRPD and prove that CRPD obtains the optimal solution for threshold graphs.
    
    \item We prove that NILD-M is NP-hard and cannot be approximated within any ratio in polynomial time unless P=NP and design OISA for NILD-M.
   
    \item Experiments on real datasets show that the proposed CRPD and OISA efficiently find near-optimal solutions for NILD-S and NILD-M and outperform the baselines. Also, an empirical study assessed by clinical psychologists and professors in the field manifests that the network intervention alleviates self-reported health outcomes of participants, and the effects are statistically significant over another control group. 
\end{itemize}

The rest of this paper is organized as follows. Section~\ref{sec:relatedwork} reviews related work. 
Section~\ref{sec:nild} formulates NILD-S, analyzes its theoretical hardness, and proposes CRPD. Section~\ref{sec:problem} formulates NILD-M and analyzes its theoretical hardness. Section~\ref{sec:algo_oisa} proposes OISA. Section~\ref{sec:experiment} reports the experiments. Finally, Section~\ref{sec:conclusion} concludes the paper.

\section{Related Work}
\label{sec:relatedwork}

The theory of network intervention has been studied in the fields of psychology, behavioral health, and education for lowering negative emotions by enhancing social integration, support, engagement, and attachment~\cite{berkman2000social}. 
It has also been adopted for family therapy and bullying avoidance~\cite{erickson1984framework}. 
In education, network intervention has been implemented to facilitate knowledge dissemination among students, thereby improving student learning~\cite{van2002differences}. 
Under current practice, new intervention links are typically selected heuristically by practitioners \cite{king2006youth, sherman2009evaluation}. However, it is very challenging to consider multiple persons simultaneously without deteriorating the status of surrounding individuals. Thus, it would be worthwhile to develop algorithms for this important need.

In the field of social network analysis, researchers have paid considerable attention to efficiently finding the number of triangles~\cite{lim2015mascot, mcgregor2016better} and selecting a group of individuals with the maximum or minimum number of triangles~\cite{shen2017finding}. Notice that the above-mentioned research mostly focuses on measuring structural properties of nodes in static or dynamic networks, with no intention to tailor and change the network graph. Recently, a new line of research in network science has emerged with the objective of revising a network graph according to specific network characteristics. 
These include maximizing the closeness centrality, betweenness centrality and influence score, minimizing the diameter, 
and enhancing the network robustness~\cite{papagelis2015refining, crescenzi2016greedily, wilder2018optimizing}.
However, these algorithms do not include LCC as a target network characteristic for intervention purposes.  Importantly, none of these algorithms was designed to optimize multiple network characteristics simultaneously. 

Recently, owing to the success of online social networks, reported cases of social network mental disorders have increased,  motivating new collaborations between data scientists and mental health practitioners. 
New machine learning frameworks have been shown to be helpful in identifying patients tending to be vulnerable, and even have clinical levels of negative emotions and unhealthy living~\cite{SSY16,saha2016framework,shuai2019newsfeed, shuai2017comprehensive}.
However, those are not designed for network intervention, which actually changes the network graph.
Finally, link prediction \cite{SNM11,BBM14,wang2018shine, brochier2019link, chen2018pme, zhang2018link, fu2018link} has been widely studied. Existing algorithms usually recommend individuals sharing many common friends and similar interests to become friends. However, they are not designed for network intervention, which does not necessarily prefer people socially close or with similar backgrounds.



\begin{table}[t]
\caption{Summary of notations}
\scriptsize
\begin{tabular}{|ll||ll|}
\hline
Term & Meaning  & Term & Meaning              \\ \hline
$G$         & original network           & $F$   & selected intervention edges    \\ 
$t$, $T$  & targeted individual(s)     & $\overbar{G}$   & network after intervention      \\ 
$N_G(v)$    & $v$'s neighbors in $G$     & $u$   & node to be connected with $t$  \\ 
$b_G(v)$    & $v$'s betweenness in $G$   & $\hat{d}$   & maximum degree in $G$   \\ 
$c_G(v)$    & $v$'s closeness in $G$     & $R$       & nodes removed by CRPD baseline  \\
$d_G(v)$    & $v$'s degree in $G$        & $r$       & a removed node in $R$ \\ 
$LCC_G(v)$  & $v$'s LCC in $G$           & $w_b$       & weights of betweenness in MISS                   \\ 
$\omega_b$  & lower bound of betweenness & $w_c$       & weights of closeness in MISS \\ 
$\omega_c$  & lower bound of closeness   & $w_d$       & weights of degree in MISS\\ 
$\omega_d$  & lower bound of degree      & $f(\cdot)$  & weight func. in threshold graph \\ 
$\tau$      & LCC degradation constraint & $k_G$       & lower bound on the num. of                      \\ 
$k$         & \# of intervention edges &             & needed edges in EORE       \\ 
$n_v$       & \# of edges among $v$'s neighbors  &$l_j$       & targeted LCC to be tried in EORE            \\ \hline
 \end{tabular}
\end{table}

\section{Intervention for A Single Target}
\label{sec:nild}

In this section, we first reduce the \textit{Local Clustering Coefficient (LCC)} of a targeted individual (denoted as $t$) by selecting a set of people from the social network to become friends with $t$. Given a social network $G=(V,E)$ (or $G$ for short), where each node $v \in V$ denotes an individual, and each edge $(i,j) \in E$ represents the social link between individuals $i$ and $j$, the \textit{ego network} of an individual $v$ is the subgraph induced by $v$ and its neighbors $N_G(v)$.
The LCC of a node $v$ in $G$, $LCC_G(v)$, is defined as the number of edges between the nodes in $N_G(v)$ divided by the maximum number of possible edges among the nodes in $N_G(v)$,
\small
\begin{equation}
LCC_G(v) = \frac{|\{(i,j)| i,j \in N_G(v), (i,j) \in E\}|}{C(d_G(v),2)}
\label{eq:lcc}
\end{equation}
\normalsize
where $d_G(v)=|N_G(v)|$ and $C(d_G(v),2)$ is the number of combinations to choose two items from $d_G(v)$ ones. Adding social links may increase LCC of other nodes not incident to any new edge. 

However, the increment of LCC for healthy people also needs to be carefully controlled.\footnote{According to the study on 2844 junior high school students over three years~\cite{typ}, the decrement of depression rates is significantly correlated with the reduction of LCCs with Pearson Correlation Coefficient as 0.3.} 
In addition to LCC, it is also important to ensure that the degree, betweenness, and closeness are sufficiently large. Therefore, we formulate the \textit{Network Intervention with Limited Degradation - for Single target} (NILD-S) problem as follows.


\begin{definition}
Given a social network $G=(V,E)$ (or $G$ for short), the target $t$, the number $k$ of intervention edges to be added, the LCC degradation threshold $\tau$, the lower bounds on betweenness, closeness and degree $\omega_{b}$, $\omega_{c}$, and $\omega_{d}$, NILD-S minimizes the LCC of $t$ by adding a set $F$ of $k$ edges incident to $t$, such that in the new network $\overbar{G}$, 1) $LCC_{\overbar{G}}(v) - LCC_{G}(v) \leq \tau$ for any $v$, 2) $b_{\overbar{G}}(t) > \omega_{b}$, 3) $c_{\overbar{G}}(t) > \omega_{c}$, and 4) $d_{\overbar{G}}(t) > \omega_{d}$, where $b_{\overbar{G}}(t)$, $c_{\overbar{G}}(t)$ and $d_{\overbar{G}}(t)$ are the betweenness, closeness, and degree of $t$ in $\overbar{G}$.
\end{definition}

NILD-S is computationally expensive. We prove that it is NP-hard and inapproximable within any ratio, i.e., there is no approximation algorithm with a finite ratio for NILD-S unless P=NP. 
However, later we show that NILD-S is tractable for threshold graphs, which share similar graph properties with many well-known online social networks, e.g., Live-Journal, Flickr, and Youtube~\cite{masuda2004analysis, vernitski2012astral, saha2014intergroup}.

\begin{theorem}
NILD-S is NP-hard and cannot be approximated within any ratio in polynomial time unless P=NP. 
\label{thm:nilds_np}
\end{theorem}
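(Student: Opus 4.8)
The plan is to reduce from a known NP-hard problem whose objective has a value that is either ``zero'' or ``strictly positive'', so that any finite-ratio approximation would have to distinguish the two cases and hence solve the original problem exactly. A natural candidate is a triangle-related decision problem on graphs: since $LCC$ counts edges among neighbors, the quantity $LCC_{\overbar{G}}(t)$ is $0$ exactly when the chosen neighbor set of $t$ forms an independent set. So I would reduce from \textsc{Independent Set} (or a padded variant guaranteeing a target independent-set size): given a graph $H$ and integer $p$, build an instance of NILD-S in which $t$ must connect to $k=p$ vertices, the degradation threshold $\tau$ and the centrality lower bounds $\omega_b,\omega_c,\omega_d$ are set loosely (or satisfied by a gadget) so they never bind, and the only obstacle to achieving $LCC_{\overbar{G}}(t)=0$ is choosing $p$ vertices of $H$ that are pairwise non-adjacent. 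Then NILD-S has optimum value $0$ iff $H$ has an independent set of size $p$, and if no such set exists the optimum is at least $1/C(p,2)>0$; a finite-ratio approximation would return $0$ in the first case and something positive in the second, deciding \textsc{Independent Set}. This simultaneously gives NP-hardness and inapproximability within any ratio.

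The construction needs care on two fronts. First, the vertices of $t$'s candidate neighbor set in $\overbar{G}$ should be placed so that edges \emph{among them} in $\overbar{G}$ are exactly the edges of $H$ restricted to the chosen set --- i.e., I would take $G$ to contain a copy of $H$ plus an isolated (or lightly attached) target $t$, with $F$ forced to consist of edges from $t$ into the copy of $H$. Because $F$ must have exactly $k$ edges all incident to $t$, selecting $F$ is literally selecting a $k$-subset $S\subseteq V(H)$ to be $N_{\overbar{G}}(t)$, and $LCC_{\overbar{G}}(t)=|E(H[S])|/C(k,2)$, which is $0$ iff $S$ is independent. Second, I must ensure the side constraints are non-binding: the degradation constraint $LCC_{\overbar{G}}(v)-LCC_G(v)\le\tau$ must not rule out the independent-set solution, and the betweenness/closeness/degree lower bounds on $t$ must be met by \emph{any} feasible $F$. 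Setting $\omega_d<k$, $\omega_b$ and $\omega_c$ to trivial values (e.g. $0$, or values achieved purely by $t$ having $k$ neighbors in a connected host graph), and choosing $\tau$ large enough (e.g. $\tau=1$, or $\tau$ at least the largest possible single-step LCC jump in the gadget) makes all four conditions slack; one can also attach a small fixed clique/path gadget to $t$ and to $H$ so that connectivity and the centrality bounds are automatically satisfied without affecting the triangle count among $S$.

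The order of steps: (1) state the reduction from \textsc{Independent Set}, describing $G$, $t$, $k:=p$, and the parameter settings for $\tau,\omega_b,\omega_c,\omega_d$; (2) observe that feasible solutions $F$ are in bijection with $k$-subsets $S$ of the planted vertex set and that all four constraints hold for every such $S$ by the choice of parameters; (3) compute $LCC_{\overbar{G}}(t)=|E(H[S])|/C(k,2)$ and conclude optimum $=0 \iff H$ has an independent set of size $p$; (4) note the optimum is either $0$ or $\ge 1/C(k,2)$, so a polynomial-time algorithm with \emph{any} finite approximation ratio $\rho$ would output a value $<\rho\cdot 0=0$ in the yes-case and a positive value in the no-case, hence decide \textsc{Independent Set}, forcing P=NP; NP-hardness follows a fortiori. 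The main obstacle I anticipate is engineering the centrality lower bounds and the $\tau$-degradation constraint so that they are simultaneously \emph{vacuous} for the reduction yet the reduced instance is still a legal NILD-S instance --- in particular verifying that adding the forced edges cannot push any bystander's LCC increment above $\tau$, which may require padding each candidate vertex's neighborhood in $G$ so that a single new common neighbor ($t$) changes its LCC by at most $\tau$; checking this bookkeeping carefully is the crux, while the ``$0$ vs.\ positive'' gap argument itself is routine.
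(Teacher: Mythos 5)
Your proposal is correct and follows essentially the same route as the paper: a reduction from \textsc{Independent Set} in which $t$'s chosen neighborhood encodes a $k$-subset of the host graph, so the optimum is $0$ iff an independent set of size $k$ exists, and the $0$-versus-$\geq 1/C(k,2)$ gap rules out any finite approximation ratio. The only cosmetic difference is that the paper reduces from MIS on \emph{triangle-free} graphs so that every LCC is initially $0$ and all side constraints can be set to $0$ without any bookkeeping, whereas you keep a general host graph and instead slacken $\tau$ and the centrality bounds --- both handle the constraint-vacuity issue you correctly identified as the crux.
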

\begin{proof}
We prove the NP-hardness by the reduction from the Maximum Independent Set (MIS) problem under triangle-free graphs (i.e., a graph without any three nodes forming a triangle)~\cite{lozin2008polynomial}. Given a triangle-free graph $G_M=(V_M,E_M)$, MIS is to find the largest subset of nodes $S_M \subseteq V_M$, such that every node in $S_M$ is not adjacent to any other nodes in $S_M$. For each instance of MIS, we construct an instance $G=(V,E)$ of NILD-S as follows. For each node $v' \in V_M$ and edge $(i',j') \in E_M$, we create the corresponding node $v \in V$ and edge $(i,j) \in E$, respectively. Also, we add a node $t$ as the targeted node and set $\tau$, $\omega_{b}$, $\omega_{c}$, and $\omega_{d}$ as 0. In the following, we prove that $G_M=(V_M,E_M)$ has an independent set $S_M$ with size $k$ in MIS if and only if the LCC of $t$ in $G=(V,E)$ remains as 0 after adding $(t, v)$ for every $v' \in S_M$. We first prove the sufficient condition. If $G_M=(V_M,E_M)$ has an independent set $S_M$ of size $k$, then there is no edge between any two nodes in $S_M$. Thus, if we add an edge $(v, t)$ for each node $v' \in S_M$ in $G(V, E)$, the LCC of $t$ is still 0. We then prove the necessary condition. If there is a set $S$ of $k$ nodes such that $t$'s LCC remains as 0 after adding $(v, t)$ to $E$, then there exists no edge among $t$'s neighbors, i.e., $S$. Therefore, $S_M$ with the corresponding nodes in $S$ is an independent set.

Next, we prove that NILD-S cannot be approximated within any ratio in polynomial time unless P=NP by contradiction. Assuming that there exists a polynomial-time algorithm with solution $lcc$ to approximate NILD-S with a finite ratio $ro$ for a triangle-free $G=(V,E)$, i.e., the LCC of the optimal solution is at least $lcc/ro$. If $lcc=0$, there is an independent set with size $k$. If $lcc>0$, the LCC of the optimal solution is at least $lcc/ro > 0$, and there is no $k$-node independent set. Thus, the approximation algorithm for NILD-S can solve MIS in polynomial time by examining $lcc$, contradicting that MIS is NP-hard \cite{lozin2008polynomial}. 
\end{proof}

\subsection{The CRPD Algorithm}
\label{subsec:crpd}

For NILD-S, a simple approach is to iteratively choose a node $u$, add $(t, u)$ into $F$, and eliminate (i.e., does not regard it as a candidate in the future) every neighbor $r$ of $u$ if adding both $(t, u)$ and $(t, r)$ into $F$ would increase the LCC of any node for more than $\tau$ (called the LCC degradation constraint).
However, the above approach does not carefully examine the structure among the neighbors of $t$. The selection of $u$ is crucial, because it may become difficult to choose its neighbors for connection to $t$ later due to the LCC degradation constraint. Therefore, a simple baseline is to extract the $u$ with the smallest degree and add $(t,u)$ to $F$, because such $u$ tends to result in the least number of neighbors removed from the pool of candidates for connecting to $t$. It removes those neighbors $r$ of $u$ if adding $(t, u)$ increases LCC of any individual by more than $\tau$. However, Example~\ref{example:CPRD_motivation} indicates that a good candidate $r$ may be improperly removed due to a small LCC.

\begin{figure}[t]
\begin{minipage}[t]{0.53\linewidth}
\centering
\includegraphics[width =0.95\linewidth]{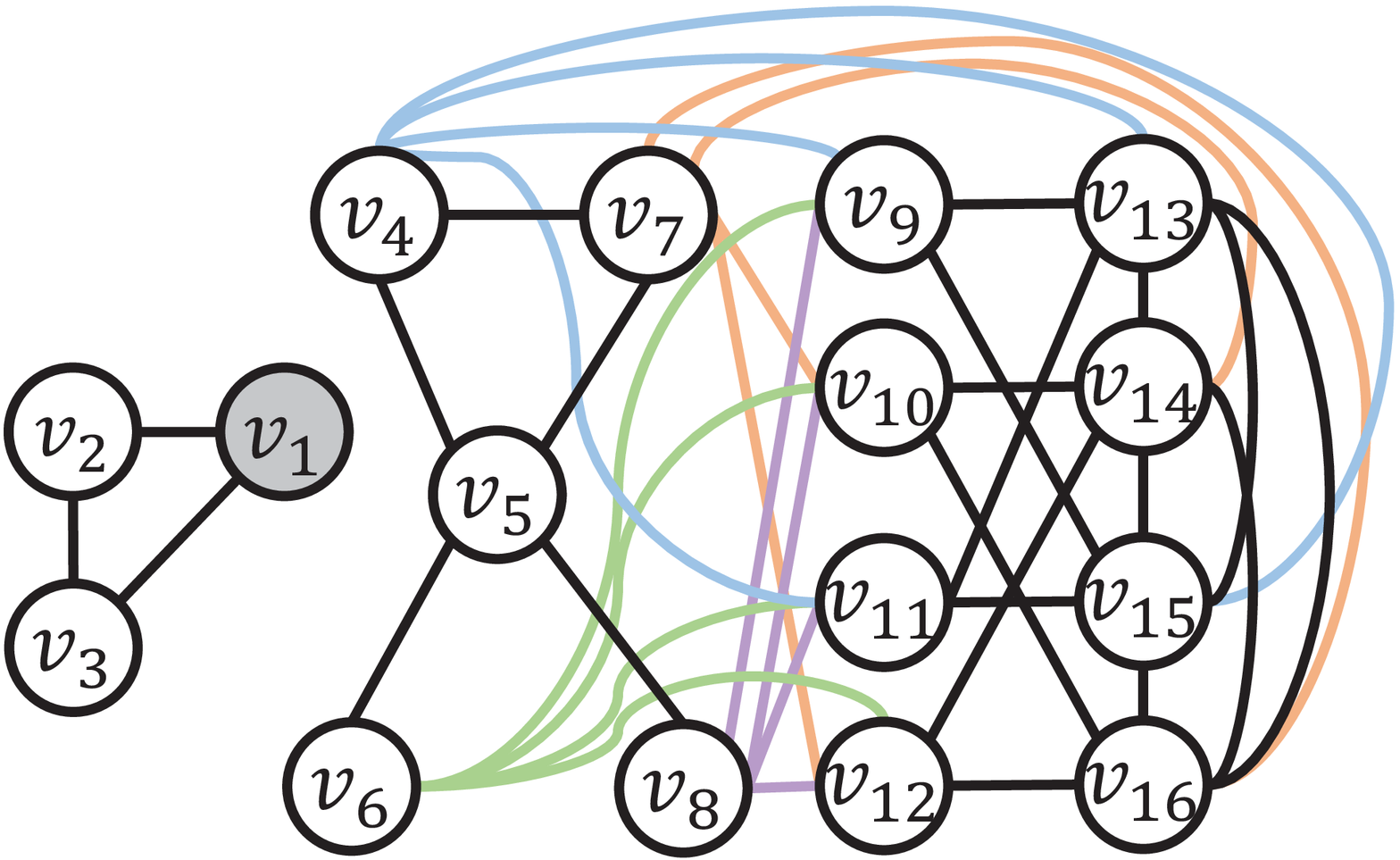}
\caption{A motivating example of CRPD}\label{fig:crpd_example}
\end{minipage}
\hspace{2pt}
\begin{minipage}[t]{0.43\linewidth}
\centering
\includegraphics[width=0.95\linewidth]{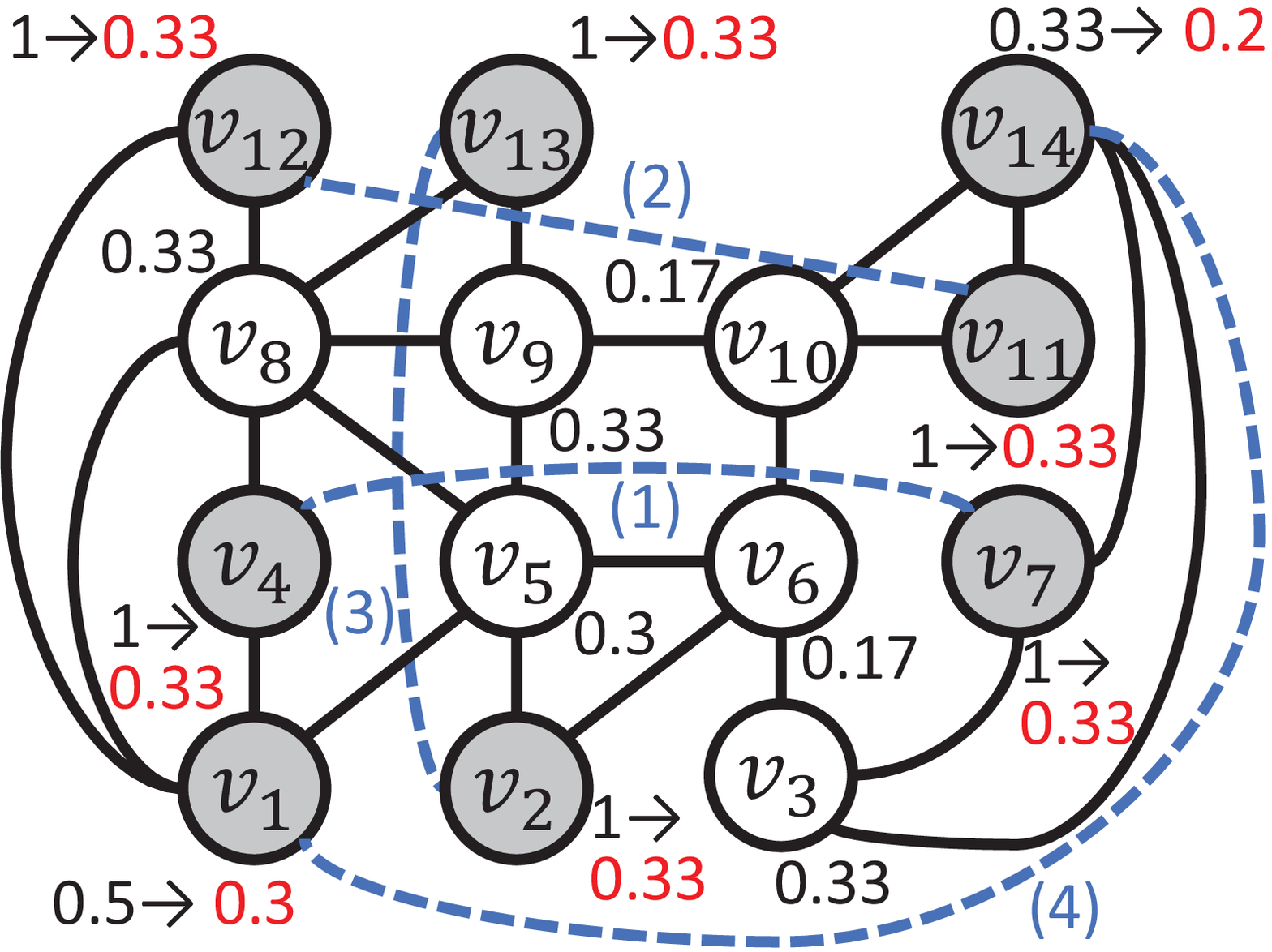}
\caption{A motivating example of OISA} \label{fig:running_example}
\end{minipage}
\end{figure}

\begin{example}
Figure~\ref{fig:crpd_example} shows an example of NILD-S with 16 nodes, where $t=v_{1}$, $k=3$, $\tau=0.05$, $\omega_{b}=0.5$, $\omega_{c}=0.5$, and $\omega_{d}=4$. Note that all edges in Figure~\ref{fig:crpd_example} are edges in $E$ regardless of their colors.  
The baseline first selects $v_5$ (i.e., adding $(v_{1}, v_5)$ to $F$) since it has the smallest degree among all nodes not connected to $v_{1}$. Then, for the neighbor $v_4$ of $v_5$, adding $(v_{1}, v_4)$ to $F =\{(v_{1}, v_5)\}$ does not increase LCC of any node to more than $\tau$. Thus, $v_4$ is still a valid candidate.\footnote{Adding $(v_{1}, v_4)$ to $F$ only changes the LCCs of $v_4$ and $v_5$, where $LCC_{\overbar{G}}(v_4) - LCC_G(v_4) = 0.33 - 0.4 < 0.05$ and $LCC_{\overbar{G}}(v_5) - LCC_G(v_5) = 0.2 - 0.17 < 0.05$.} However, after choosing $v_5$, the baseline excludes $v_6$ and $v_8$ from candidates as their respective LCCs will be increased by 0.06. Afterward, the baseline selects $v_{13}$ and $v_{14}$ to reduce the LCC of $v_1$ from 1 to 0.2. Nevertheless, a better approach is to select $v_6$, $v_8$ and $v_4$, as the LCC of $v_{1}$ can effectively diminish to 0.1.  
\label{example:CPRD_motivation}
\end{example}

\begin{algorithm}[t]
\caption{The CRPD algorithm}
\label{alg:crpd}
\begin{algorithmic}[1]
\REQUIRE $G$, $t$, $k$, $\tau$, $\omega_b$, $\omega_c$, $\omega_d$
\ENSURE A set $F$ of $k$ edges incident to $t$ to be added

\STATE{$(F, R) \leftarrow Baseline(t, k, \emptyset)$   \  //baseline}
\STATE{Retrieve the minimum degree $r_m$ from $R$}
\STATE{$(F', R') \leftarrow Baseline(t, k, \{(t, r_m)\})$   \  //RNR}
\STATE{Replace $F$ with $F'$ if $F'$ is better}
\RETURN{$F$}

\STATE{\textbf{function} Baseline($t$, $k$, $F$) //function called by CRPD}
\begin{ALC@g}
	\STATE {Let $C$ include all $i$ s.t. add $(t,i)$ satisfies the $\tau$ constraint}
	\STATE $R\leftarrow \emptyset$
	\WHILE {$|F| < k$}
		\STATE {Choose $u$ from $C$ according to $d_G(u)$ and Eq.~\ref{eq:miss}}
		\IF{$\exists i \in C$ s.t. add $(t,i)$ violates the $\tau$ constraint}
		    \STATE {$C\leftarrow C\backslash\{i\}$; $R\leftarrow R\cup\{i\}$ }
		\ENDIF
	\ENDWHILE	
    \RETURN {($F$, $R$)}
\end{ALC@g}
\end{algorithmic}
\vspace{-3pt}
\end{algorithm}

Motivated by Example~\ref{example:CPRD_motivation}, we propose the \textit{Candidate Re-selection with Preserved Dependency} (CRPD) algorithm for NILD-S. CRPD includes two components: 1) \textit{Removed Node Re-selection Strategy},
and 2) \textit{Multi-measurement Integration Selection Strategy}, as follows. A pseudocode of CRPD is shown in Algorithm~\ref{alg:crpd}.

\noindent \textbf{Removed Node Re-selection (RNR) Strategy.} 
To avoid missing good candidates when processing each node $u$ in the above baseline, CRPD first extracts $R$ during the process, where each $r$ in $R$ is a neighbor of $u$ removed by the baseline due to the LCC degradation constraint, i.e., including both $(t, u)$ and $(t, r)$ in $F$ increases any node's LCC for more than $\tau$. CRPD improves the above baseline by conducting a deeper exploration that tries to replace $(t,u)$ by $(t,r_m)$, where $r_m$ is the node with the minimum degree in $R$, if selecting $(t,r_m)$ instead removes fewer neighbors and obtains a better solution later.  
In Example~\ref{example:CPRD_motivation}, adding $(v_1, v_5)$ to $F$ removes its neighbors $v_6$ and $v_8$ because including $(v_1, v_6)$ or $(v_1, v_8)$ to $F=\{(v_1, v_5)\}$ increases the LCCs of $v_6$ or $v_8$ by more than $\tau$, respectively. In contrast, adding $(v_1, v_6)$ (instead of $(v_1, v_5)$) to $F$ only removes $v_5$ and obtains a better solution $\{(v_1, v_6), (v_1, v_{8}), (v_1, v_{4})\}$, compared with the solution $\{(v_1, v_5), (v_1, v_{13}), (v_1, v_{14})\}$ of the baseline. With the above deeper inspection, later we prove that CRPD can find the optimal solution of NILD-S in threshold graphs.

\noindent \textbf{Multi-measurement Integration Selection Strategy (MISS).} 
To address $\omega_{d}$, the degree of $t$ can be examined according to $k+d_G(t)$ when more edges are included. However, when the betweenness and closeness of $t$ are smaller than $\omega_{b}$ and $\omega_{c}$, respectively, CRPD selects $u$ as follows to improve the betweenness and closeness of $t$, when multiple nodes available to be chosen as $u$.
\begin{equation}
\medmuskip=1mu
\thinmuskip=1mu
\thickmuskip=1mu
    u = \argmax_{u \notin N_G(t)}{w_b  \cdot b_G(u) + w_c \cdot c_G(u) + (1 - w_b - w_c) \cdot d_{G}(u)},    
\label{eq:miss}
\end{equation}
where $b_G(u)$ and $c_G(u)$ denote the betweenness and closeness of $u$, and $w_b$ and $w_c$ are the weights of betweenness and closeness in selecting $u$. $w_b$ (or $w_c$) is
derived by computing the difference of $t$'s betweenness and $\omega_{b}$ (or $t$'s closeness and $\omega_{c}$) when $t$'s betweenness (or closeness) is smaller than $\omega_{b}$ (or $\omega_{c}$). Otherwise, $w_b$ (or $w_c$) is set as 0.

\subsection{Solution Quality in Threshold Graph}
\label{subsec:tisw_optimality}

We prove that CRPD obtains the optimal solution for \textit{threshold graphs}, which are similar to many well-known online social networks in terms of important properties like the scale-free degree distribution, short diameter, and clustering coefficient~\cite{masuda2004analysis, vernitski2012astral, saha2014intergroup}.

\begin{definition}
A graph $G=(V,E)$ is a \textit{threshold graph} if there exists a weight function $f(\cdot):V \rightarrow R$ and a threshold $t_G$, such that for any two nodes $i,j \in V$, $f(i) + f(j) > t_G$ if and only if $(i,j) \in E$.
\label{def:threshold_graph}
\end{definition}
\vspace{-\abovedisplayskip}

\begin{theorem}
The CRPD algorithm can find an optimal solution of NILD-S when $G=(V,E)$ is a threshold graph, and the running time of CRPD is $O(|E|\hat{d} + k\hat{d}|V|)$, where $\hat{d}$ is the maximum degree in $G=(V,E)$. 
\label{thm:crpd_optimal}
\end{theorem}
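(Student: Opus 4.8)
The plan is to exploit the very rigid structure of threshold graphs: they admit an ordering of vertices $v_1,\dots,v_n$ by nondecreasing weight $f$, and in such an ordering the neighborhoods are nested, i.e.\ $N_G(v_i)\setminus\{v_j\} \subseteq N_G(v_j)$ whenever $f(v_i)\le f(v_j)$ (modulo the usual care with adjacency between $v_i$ and $v_j$ themselves). Equivalently, every threshold graph is built by repeatedly adding either an isolated vertex or a dominating vertex. I would first record this nested-neighborhood property as the structural backbone of the argument, because it converts all the LCC bookkeeping into statements about containment of neighbor sets.

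The key combinatorial observation I would establish next is: in a threshold graph, if we pick two non-neighbors $u$ and $u'$ of $t$ with $d_G(u)\le d_G(u')$, then connecting $t$ to $u$ is never worse than connecting $t$ to $u'$ --- it causes no larger LCC increment at any vertex, and it eliminates no more candidates from the pool than connecting to $u'$ would. The reason is precisely neighbor nesting: a smaller-degree vertex sits "earlier" in the threshold order, so its neighborhood is contained in that of any larger-degree non-neighbor of $t$, hence fewer new triangles are created and fewer future candidates are blocked by the $\tau$-constraint. This is an exchange argument: given any feasible solution $F^\ast$, one can replace its highest-degree chosen endpoint by the smallest-degree unused feasible candidate without increasing the objective or violating feasibility, and iterate. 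The conclusion is that some optimal solution consists of the $k$ smallest-degree endpoints that can be added while respecting the $\tau$-constraint --- but with the subtlety that "which candidates survive" depends on the order of selection, which is exactly the phenomenon Example~\ref{example:CPRD_motivation} highlights. Here is where I would invoke the RNR step: I would argue that in a threshold graph the only way a good small-degree candidate $r$ gets prematurely removed is by the single chosen endpoint $u$ that is its neighbor and has larger degree, and that the one-shot re-selection $\{(t,r_m)\}$ performed by CRPD's second \texttt{Baseline} call is enough to recover the optimal choice --- because nesting forces all the "conflicts" to funnel through a single minimum-degree removed node, so one level of backtracking suffices rather than an exponential search.

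For the constraints on betweenness, closeness, and degree I would argue these are automatically compatible with the LCC-optimal choice in a threshold graph: adding any edge incident to $t$ weakly increases $d_G(t)$, and in a threshold graph (which has a dominating-vertex structure and diameter at most $2$ among the non-isolated part) adding edges incident to $t$ cannot decrease $t$'s closeness, and one checks betweenness behaves monotonically as well; since CRPD's MISS tie-break (Eq.~\ref{eq:miss}) only activates among endpoints of equal LCC-merit, it never sacrifices LCC optimality, so if any feasible solution satisfies $\omega_b,\omega_c,\omega_d$ then the LCC-optimal one produced by CRPD does too. Finally the running time: the initial \texttt{Baseline} scans candidates and for each checks the $\tau$-constraint by inspecting neighbor pairs, costing $O(|E|\hat d)$ overall; each of the $k$ selection rounds evaluates Eq.~\ref{eq:miss} over $O(|V|)$ candidates and updates removals in $O(\hat d)$ per candidate, giving $O(k\hat d|V|)$; the single RNR re-run adds only a constant factor. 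Summing yields $O(|E|\hat d + k\hat d|V|)$.

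The main obstacle I anticipate is not the exchange argument for the objective but the claim that \emph{one} round of removed-node re-selection is sufficient for optimality. Making this precise requires showing that in a threshold graph the candidate-removal relation, restricted to the minimum-degree vertices that are actually worth choosing, has no "chain of bad removals" longer than one --- i.e.\ recovering $r_m$ cannot in turn require recovering some other node that was removed by $r_m$'s own selection. I expect this to follow from neighbor nesting plus the fact that $\tau=$ is handled by a threshold on triangle counts that is itself monotone in the threshold order, but it is the step that genuinely uses the threshold-graph hypothesis and where a naive general-graph intuition would fail.
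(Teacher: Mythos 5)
Your core idea --- nested neighborhoods in the threshold order plus an exchange argument replacing a larger-ID chosen endpoint by a smaller-ID one --- is exactly the engine of the paper's proof (there, phrased as $N_G(u_i)\subseteq N_G(u_i^*)$ for the first node where CRPD's solution and a hypothetical better optimum differ, after splitting into the easy case $|V_Z\cup V_D\setminus\{t\}|\ge k$, where the chosen endpoints form an independent set and only the denominator of Eq.~\ref{eq:lcc} grows, and the harder case handled by the exchange). Your running-time accounting also matches the paper's. The difference is where you locate the difficulty: you hang the optimality claim on showing that \emph{one} round of removed-node re-selection (RNR) suffices, i.e.\ that there is no ``chain of bad removals'' longer than one, and you explicitly leave that step unproven. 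That is a genuine gap in your write-up as it stands --- but it is also a detour. The paper's proof never argues about RNR sufficiency at all: in a threshold graph, the exchange argument you yourself state already shows that the plain greedy baseline (always take the smallest-degree/ID feasible candidate) is optimal, because the first point of divergence from any optimum can always be exchanged in CRPD's favor without losing feasibility or increasing $t$'s LCC. So the claim you flag as the hard part is not needed; pushing your own ``key combinatorial observation'' to its conclusion closes the argument directly.

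A second, smaller gap: your treatment of the $\omega_b,\omega_c,\omega_d$ constraints asserts monotonicity loosely (``adding edges incident to $t$ cannot decrease closeness''), but what is actually required is that the \emph{exchange} $u_{fs}\mapsto u_f$ preserves feasibility when comparing two different candidate solutions. The paper proves this (its Lemma~\ref{lemma:crpd_feasible}) by a case analysis on whether the exchanged nodes lie in $V_Z$, $V_D$, or $V_C$, tracking how shortest paths through $t$ change; you would need some version of that analysis, since betweenness in particular is not obviously monotone under swapping one endpoint for another.
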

With the property of threshold graphs, we prove that CRPD always finds a feasible solution, and the obtained feasible solution is one of the optimal ones in Appendix~\ref{proof:crpd_optimal}.

\section{Intervention for Multiple Targets}
\label{sec:problem}

We  formulate the \textit{Network Intervention with Limited Degradation - for Multiple targets} (NILD-M) problem and show the NP-hardness.


\begin{definition}
Given a social network $G=(V,E)$ (or $G$ for short), the number of $k$ intervention edges to be added, the LCC degradation threshold $\tau$, the lower bounds on betweenness, closeness and degree $\omega_{b}$, $\omega_{c}$, and $\omega_{d}$, and a set of targeted individuals $T$, the NILD-M problem minimizes the maximal LCC among all nodes in $T$, i.e., $\max_{t \in T} LCC_{\overbar{G}}(t)$, by selecting a set of $k$ intervention edges $F$, such that in the new network $\overbar{G}$, 1) $LCC_{\overbar{G}}(v) - LCC_{G}(v) \leq \tau$ for any $v$, 2) $b_{\overbar{G}}(t) > \omega_{b}$ for any $t \in T$, 3) $c_{\overbar{G}}(t) > \omega_{c}$ for any $t \in T$, and 4) $d_{\overbar{G}}(t) > \omega_{d}$ for any $t \in T$, where $b_{\overbar{G}}(t)$, $c_{\overbar{G}}(t)$ and $d_{\overbar{G}}(t)$ are the betweenness, closeness, and degree of $t$ in $\overbar{G}$.
\end{definition}
\vspace{-\abovedisplayskip}

\begin{corollary}
NILD-M is NP-hard and cannot be approximated within any ratio in polynomial time unless P=NP. 
\label{col:nildm_np}
\end{corollary}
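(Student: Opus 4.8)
The plan is to obtain NILD-M as an immediate generalization of NILD-S, so that the hardness of Theorem~\ref{thm:nilds_np} transfers with essentially no new work. The key observation is that NILD-S is the special case of NILD-M in which the target set $T$ is a singleton: if $|T| = 1$, say $T = \{t\}$, then $\max_{t' \in T} LCC_{\overbar{G}}(t') = LCC_{\overbar{G}}(t)$, and the four side constraints of NILD-M (the $\tau$ bound for every $v$, and the betweenness, closeness, and degree lower bounds for every $t' \in T$) collapse exactly to the four constraints of NILD-S. Hence any instance of NILD-S is, verbatim, an instance of NILD-M with the same optimal objective value.

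First I would state this reduction formally: given an NILD-S instance $(G, t, k, \tau, \omega_b, \omega_c, \omega_d)$, construct the NILD-M instance $(G, k, \tau, \omega_b, \omega_c, \omega_d, T)$ with $T = \{t\}$. This construction is trivially polynomial-time (it is the identity map on the data), and a set $F$ of $k$ edges is feasible for one instance if and only if it is feasible for the other, with identical objective value. Therefore NILD-S reduces to NILD-M in polynomial time, so NILD-M is NP-hard by Theorem~\ref{thm:nilds_np}.

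Next I would argue the inapproximability in the same breath. Since the reduction preserves the objective value exactly (not merely up to a constant), any polynomial-time algorithm approximating NILD-M within a finite ratio $ro$ would, when run on the singleton-$T$ instances coming from the triangle-free MIS construction in the proof of Theorem~\ref{thm:nilds_np}, approximate NILD-S within ratio $ro$ — distinguishing optimal LCC $0$ from optimal LCC $> 0$ and hence solving Maximum Independent Set on triangle-free graphs in polynomial time. This contradicts $\mathrm{P} \neq \mathrm{NP}$. So NILD-M admits no polynomial-time approximation within any ratio unless $\mathrm{P} = \mathrm{NP}$.

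There is essentially no obstacle here: the only thing to check carefully is that the constraint set of NILD-M genuinely specializes to that of NILD-S when $T$ is a singleton (in particular that the universally-quantified $\tau$ constraint is identical in both problems, which it is), and that the ``cannot be approximated within any ratio'' phrasing is inherited because the gap instances already have optimum exactly $0$ versus strictly positive. I would write the corollary's proof in two or three sentences, citing Theorem~\ref{thm:nilds_np} for both the NP-hardness and the inapproximability, and noting that NILD-S is the restriction of NILD-M to $|T| = 1$.
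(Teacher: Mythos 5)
Your proposal is correct and follows exactly the paper's argument: the paper likewise observes that NILD-S is the special case of NILD-M with $|T|=1$ and inherits both the NP-hardness and the inapproximability from Theorem~\ref{thm:nilds_np}. Your additional care in checking that the objective and the constraint set specialize verbatim under a singleton $T$ is sound and merely makes explicit what the paper leaves implicit.
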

Corollary~\ref{col:nildm_np} follows since NILD-S, the special case of NILD-M when $|T|=1$, is NP-hard and cannot be approximated within any ratio in polynomial time unless P=NP in Theorem~\ref{thm:nilds_np}.

\section{The OISA Algorithm}
\label{sec:algo_oisa}

A na\"ive approach for NILD-M is to exhaustively search every $k$-edge set spanning the nodes in $T$. However, the approach is not scalable as shown in Section \ref{sec:experiment}. In the following, we first present two baseline heuristics, \textit{Budget Utility Maximization (BUM)} and \textit{Surrounding Impact Minimization (SIM)} for NILD-M.  

\noindent \textbf{Budget Utility Maximization (BUM).} To intervene the maximal number of individuals within $T$, BUM repeatedly selects the node $u$ having the largest LCC without contradicting any constraints and connects $u$ to the node $m$ with the largest LCC in $T$ until $k$ edges are selected. 

\noindent \textbf{Surrounding Impact Minimization (SIM).} Without considering the proximity between $m$ and $u$, adding $(m, u)$ sometimes increases the LCCs of their common neighbors. To avoid the above situation, SIM chooses the $u$ with the maximum number of hops from $m$ in $T$ because adding $(m,u)$ is less inclined to change the LCCs of other neighbor nodes.

In summary, BUM carefully evaluates the LCC of $u$ but ignores the structural properties, whereas SIM focuses on the distance between $m$ and $u$ but overlooks the LCC. Be noted that BUM and SIM are also equipped with MISS to ensure they obtain feasible solutions.
Example~\ref{example:optimility} indicates that their solutions are far from the optimal solution.

\begin{example}
Figure~\ref{fig:running_example} presents an example with $k = 4$ and $\tau = 0.15$. $G=(V,E)$ includes 14 nodes and 23 black solid edges, where each node $v$ is labeled aside by its $LCC_{G}(v)$ in dark. The targeted $T$ includes 8 grey nodes with the largest LCC in $V$, i.e., $T=\{v_2, v_4, v_7, $ \\ $v_{11},v_{12}, v_{13}, v_1, v_{14}\}$. $\omega_b$, $\omega_c$, $\omega_d$ are set as 0.02, 0.44, and 3, respectively. 
The weights $w_b$, $w_c$, and $(1-w_b-w_c)$ of MISS are set as 0.33, 0.33, and 0.34. 
For this instance, the maximal LCCs obtained by BUM and SIM are both 1. In contrast, the maximal LCC in the optimal solution is 0.66 acquired by adding the dotted blue edges into $F$. For each node $v$ whose $LCC_{\overbar{G}}(v)$ is not equal to $LCC_{G}(v)$, its $LCC_{\overbar{G}}(v)$ are labeled aside in red. The optimal solution effectively lowers the maximal LCC by 34\% from BUM and SIM without increasing any node's LCC.
\label{example:optimility}
\end{example}

Motivated by the strengths and pitfalls of BUM and SIM, we propose \textit{Objective-aware Intervention Edge Selection and Adjustment (OISA)} to jointly consider the LCCs and the network structure with three ideas: 1) \textit{Expected Objective Reaching Exploration (EORE)}, and 2) \textit{Poor Optionality Node First (PONF)}, and 3) \textit{Acceleration of LCC Calculation (ALC)}. 
EORE finds the minimum number of intervention edges required to achieve any targeted LCC for each candidate solution and sees if it can meet the budget constraint $k$. Given a targeted LCC, PONF carefully adds intervention edges to avoid seriously increasing LCCs for some individuals. A pseudocode of OISA is shown in Algorithm~\ref{alg:oisa}.

\subsection{Expected Objective Reaching Exploration}

EORE carefully examines the correlation between the LCC reduction and the number of intervention edges. Lemma~\ref{lemma:edge_lowerbound} first derives the minimum number of required intervention edges $k_G$ for $G$ to achieve any targeted LCC. To meet the degree constraint $\omega_{d}$, the minimum number of edges for each node $t \in T$ is at least $\omega_{d} - d_G(t)$.


\begin{lemma}
Given a node $t$ of degree $d_{G}(t)$, to reduce \\ $LCC_{\overbar{G}}(t)$ to any targeted LCC $l$, the minimum number of intervention edges $k_{t}$ is the smallest number satisfying:
\small
\begin{equation}
\frac{LCC_G(t)\times d_{G}(t)(d_{G}(t)-1)}{(d_{G}(t) + k_{t})(d_{G}(t)+k_{t} -1)} \leq l.
\label{eq:edge_lowerbound}
\end{equation}
\normalsize
Also, the minimum number of edges $k_G$ for $G$ is \newline $0.5\cdot \sum_{t\in T} \max\{k_{t}, \omega_{d} - d_G(t)\} \leq k_G$ 
(proof in Appendix~\ref{proof:edge_lowerbound}).
\label{lemma:edge_lowerbound}
\end{lemma}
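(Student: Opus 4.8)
The plan is to prove the two claims of Lemma~\ref{lemma:edge_lowerbound} separately: first the per-node bound on $k_t$, then the aggregate bound on $k_G$. For the per-node part, I would start from the definition of LCC in Eq.~\ref{eq:lcc}. The key observation is that adding $k_t$ edges incident to $t$ increases $t$'s degree from $d_G(t)$ to $d_G(t) + k_t$, so the denominator $C(d_G(t)+k_t, 2) = \frac{(d_G(t)+k_t)(d_G(t)+k_t-1)}{2}$ grows, while the numerator --- the number of edges among $t$'s neighbors --- can only \emph{stay the same or increase}, never decrease, since we never remove edges. Hence the most optimistic (smallest achievable) value of $LCC_{\overbar{G}}(t)$ after adding $k_t$ edges is obtained when the numerator is unchanged, i.e. it equals $LCC_G(t)\cdot C(d_G(t),2)$. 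Dividing this frozen numerator by the enlarged denominator gives exactly the left-hand side of Eq.~\ref{eq:edge_lowerbound} (after the factor of $2$ in the combinations cancels top and bottom). Therefore, to have any chance of reaching a targeted LCC of $l$, we need this expression to be $\le l$, and since the left-hand side is strictly decreasing in $k_t$, the minimum feasible $k_t$ is the smallest integer satisfying the inequality. I would also note the degree constraint contributes the separate requirement $k_t \ge \omega_d - d_G(t)$, so any feasible per-node edge count must be at least $\max\{k_t, \omega_d - d_G(t)\}$.

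For the aggregate bound, the idea is a simple double-counting / handshake argument. Each intervention edge in $F$ has two endpoints. Let $F_t$ denote the set of intervention edges incident to node $t$; then $\sum_{t \in T} |F_t| \le \sum_{v \in V} |F_v| = 2|F| = 2k_G$, because every edge of $F$ is counted at most twice on the left (once for each endpoint that happens to lie in $T$) and exactly twice on the right when we sum over all of $V$. From the per-node analysis, feasibility forces $|F_t| \ge \max\{k_t, \omega_d - d_G(t)\}$ for each $t \in T$. Combining, $\sum_{t \in T}\max\{k_t, \omega_d - d_G(t)\} \le \sum_{t\in T}|F_t| \le 2k_G$, which rearranges to $0.5\cdot\sum_{t\in T}\max\{k_t,\omega_d - d_G(t)\} \le k_G$, as claimed.

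The main subtlety --- and the step I would be most careful about --- is the monotonicity claim that adding edges incident to $t$ can never decrease the edge count among $t$'s neighborhood. This is true because $N_G(t)$ is a subset of $N_{\overbar{G}}(t)$ (we only add neighbors, never remove them), and every edge among the old neighbors is still present in $\overbar{G}$; new edges among the enlarged neighborhood can only add to the count. A second point worth stating explicitly is that the bound is a \emph{lower} bound on $k_G$ valid for \emph{any} targeted LCC simultaneously in the sense that it uses the $k_t$ associated with the particular target LCC under consideration --- EORE will instantiate $l$ (the $l_j$ in the pseudocode) and then invoke this bound. I would not expect any difficulty with the arithmetic of cancelling the $C(\cdot,2)$ factors, nor with the handshake inequality; the only thing to flag is that the inequality $\sum_{t\in T}|F_t| \le 2k_G$ is loose when targets are non-adjacent (each edge then touches at most one target), but looseness is harmless for a lower bound. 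I would keep the write-up short, emphasizing the frozen-numerator argument as the crux.
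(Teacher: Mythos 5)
Your proposal is correct and follows essentially the same route as the paper's own proof: the paper writes the post-intervention LCC as $\frac{n_t+n'_t}{C(d_G(t)+k'_t,2)}$ and drops the nonnegative term $n'_t$ to get the same ``frozen numerator'' lower bound (phrased as a contradiction rather than a direct monotonicity argument), and it justifies the factor $0.5$ with the same observation that each edge has two endpoints. Your explicit handshake inequality and the remark on its looseness are just slightly more detailed renderings of the paper's one-line closing sentence.
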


Equipped with Lemma~\ref{lemma:edge_lowerbound}, a simple approach is to examine every possible LCC, i.e., $\frac{1}{C(\hat{d},2)}, \frac{2}{C(\hat{d},2)}, ..., 1$, where $\hat{d}$ is the maximum degree among all nodes in $G$. However, since adding $k$ edges can make intervention for at most $2k$ nodes, it is not necessary to scan all possible LCCs, and EORE thereby only examines a small number of targeted LCCs $l_j$, where $l_j = \frac{j}{C(\hat{d}_{2k},2)}$ with $j = 1, 2, ...$, until $l_j$ exceeds the maximal LCC before intervention, and $\hat{d}_{2k} \leq \hat{d}$ is the maximum degree among all top-$2k$ nodes with the largest LCCs in $T$. OISA skips an $l_j$ if $k_G > k$ according to the above lemma. 

For any targeted LCC $l_j$, every node $t\in T$ requires at least $\max\{k_{t}, \omega_d - d_G(t)\}$ intervention edges to achieve the targeted LCC $l_j$ and the degree constraint. Thus, OISA stops the edge selection process if there exists a node $t$ not able to achieve the above goals when $k - |F| + x_{t} < \max\{k_{t}, \omega_d - d_G(t)\}$, where $x_{t}$ is the number of intervention edges incident with $t$ in $F$.

\begin{example}
\label{eg:eore}
For the example in Figure~\ref{fig:running_example}, the node with the maximum degree among the top-$2k$ nodes with the largest LCCs is $v_{1}$, where its degree is 4. The targeted LCCs to be examined (i.e., $l_j$) are 
0.17, 0.33, 0.5, 0.67, 0.83, 1.
For the targeted LCC $l_1 = 0.33$, $k_G$ is $7/2 = 3.5$ since $k_{t}=0$ for $v_{14}$, and $k_{t} = 1$ for $v_1$, $v_2$, $v_4$, $v_7$, $v_{11}$, $v_{12}$, and $v_{13}$. However, for the targeted LCC $l_2 = 0.17$, $k_G$ is $(2 \times 7 + 1)/2 = 8.5$ since $k_{t} =1$ for $v_{14}$, and $k_{t} = 2$ for $v_1$, $v_2$, $v_4$, $v_7$, $v_{11}$, $v_{12}$, and $v_{13}$. Thus, it is impossible for the maximal LCC to reach 0.17. 
\end{example}

\subsection{Poor Optionality Node First}
\label{subsubsec:ponf}

Recall that BUM ignores the proximity between the two terminal nodes of an intervention edge, and SIM does not examine the LCCs of both terminals and their nearby nodes. Most importantly, both strategies do not ensure the LCC degradation constraint. To address this critical issue, for each targeted LCC, we first propose the notion of \textit{optionality} to identify qualified candidate intervention edges that do not increase the LCC of any individual to more than $\tau$.

\begin{definition}
\textbf{Optionality.} For a target $t$, the optionality of $t$ denotes the number of nodes in the option set $U_{t} \subseteq T$ such that for every $u_t \in U_{t}$, either 1) the hop number from $u_t$ to $t$ is no smaller than 3, or 2) $u_t$ is two-hop away from $t$ and adding an edge $(t, u_t)$ does not increase the LCC of any common neighbor by more than $\tau$.
\label{def:optionality}
\end{definition}


For the first case, adding an intervention edge $(t, u_t)$ does not increase the LCC of any node. For the second case, the LCC degradation constraint can be ensured as long as the LCCs of common neighbors are sufficiently small. Equipped with the optionality, each iteration of PONF first extracts the node $m$ with the largest LCC in $T$. If there are multiple candidates for $m$ with the same LCC, (e.g., $v_2$ and $v_4$ in Figure~\ref{fig:running_example}), 
PONF selects the one with the smallest optionality as $m$ so that others with larger optionalities can be employed later. In contrast, if $m$ was not selected by now, its optionality tends to decrease later when the network becomes denser and may reach 0, so that the LCC of $m$ can no longer be reduced without increasing the LCCs of others, boosting the risk to violate the LCC degradation constraint. The node $u$ of the intervention edge $(m,u)$ is the one with the largest LCC in the option set $U_{m}$ to reduce the LCCs of both $m$ and $u$. PONF also exploits MISS in Section~\ref{subsec:crpd} to choose $u$ according to the differences between $t$'s betweenness and $\omega_{b}$, and between $t$'s closeness and $\omega_{c}$. 

\begin{example}
For the example in Figure~\ref{fig:running_example} with the targeted LCC as 0.33, one intervention edge is selected for $v_2$, $v_4$, $v_7$, $v_{11}$, $v_{12}$ and $v_{13}$. In the first iteration, the optionalities of nodes $v_{2}$, $v_{4}$, $v_{7}$, $v_{11}$, $v_{12}$ and $v_{13}$ are 7, 5, 5, 5, 6, and 7, respectively. The option set of $v_{4}$ is $\{v_{2}, v_7, v_{11}, v_{13}, v_{14}\}$. Thus, PONF chooses $v_{4}$ as $m$ and $v_{7}$ as $u$. Note that $v_7$ is the node with the largest difference to reach $\omega_{b}$, $\omega_{c}$, and $\omega_{d}$ according to Equation~\ref{eq:miss}. In the second iteration, the optionality of $v_{2}$, $v_{11}$, $v_{12}$, and $v_{13}$ are 7, 5, 6, 7, respectively. Thus, PONF selects $v_{11}$ as $m$ and $v_{12}$ as $u$. It repeats the above process and chooses $(v_2, v_{13})$ and $(v_1, v_{14})$ afterward. Figure~\ref{fig:running_example} shows the returned solution with the maximal LCC as 0.33. It is also the optimal solution in this example.
\end{example}

\begin{algorithm}[t]
\caption{The OISA algorithm}
\label{alg:oisa}
\begin{algorithmic}[1]
\REQUIRE $G$, $T$, $k$, $\tau$, $\omega_b$, $\omega_c$, $\omega_d$
\ENSURE A set $F$ of $k$ edges to be added, such that the maximal LCC among nodes in $T$ are minimized, while the LCC increment of any node does not exceed $\tau$ and the betweenness, closeness, and degree of all target nodes exceed $\omega_b$, $\omega_c$, $\omega_d$.

\STATE{$j\leftarrow 1$}
\WHILE{$l_j = \frac{j}{C(\hat{d}_{2k},2)}$ < the maximum LCC of nodes in $G$}
    \FOR {every $t \in T$}
        \STATE{Calculate $k_{t}$ according to Lemma~\ref{lemma:edge_lowerbound}}
    \ENDFOR
    \STATE{Set $k_G$ as sum of $\max\{k_{t}, \omega_d - d_G(t)\}$ of all $t \in T$}
    \IF{$k_G < k$}
        \STATE{$F \leftarrow \emptyset$ //Enter the PONF process }
        \FOR {$i = 1...k$}
            \STATE{Choose $m$ as the node with the maximal LCC in $T$}
            \STATE{Choose $u$ according to Definition~\ref{def:optionality} and Eq.~\ref{eq:miss}}
            \STATE{Add $(m, u)$ into $F$}
            \STATE{Recompute nodes' LCC with the acceleration of ALC}
        \ENDFOR
        \STATE{Record $F$ if it reaches a smaller maximal LCC}
    \ENDIF
    \STATE{$j \leftarrow j + 1$}
\ENDWHILE
\RETURN{The best $F$ found}
\end{algorithmic}
\end{algorithm}

\subsection{Acceleration of LCC Calculation}
 
When an edge $(m, u)$ is added, only the LCCs of $m$ and $u$ and their common neighbors are likely to change. However, the LCC update cost of $m$ is not negligible when the number of neighbors is huge, since it needs to examine whether there is an edge from $u$ to each of $m$'s neighbors. The adjacency list of every $m$'s neighbor $v_c$ is required to be inspected even when the LCC of $v_c$ remains the same. To improve the efficiency, ALC avoids examination of every node $v$ by deriving its LCC upper bound $\overline{LCC}(v,k)$.

\begin{definition}
The LCC upper bound of $v$ after adding $k$ edges is 
\small
\begin{equation}
\overline{LCC}(v,k) = \max_{k_1+k_2 = k} \{ \frac{n_v + k_2 +k_1 \times d_G(v) + C(k_1,2)}{C(d_G(v) + k_1, 2)}\},
\end{equation}
\normalsize
where $d_G(v)$ is the degree of $v$ in $G$. $n_v$ is the number of edges between $v$'s neighbors before intervention, $n_v = LCC_G(v) \times C(d_G(v), 2)$.
\label{def:upper_bound_LCC}
\end{definition}

\vspace{-3pt}

\begin{theorem}
For an intervention edge set $F$ of size $k$, \\ $LCC_{\overbar{G}}(v) \leq \overline{LCC}(v,k)$ holds.
\label{lemma:upper_bound_relation}
\end{theorem}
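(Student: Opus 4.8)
The plan is to establish the bound by a direct edge-counting argument in $\overbar{G}$, choosing the split $k = k_1 + k_2$ that comes from $F$ itself and showing that it already certifies the inequality. Fix the node $v$ and write $d := d_G(v)$. Let $k_1$ be the number of edges of $F$ incident to $v$ and put $k_2 := k - k_1$; since $0 \le k_1 \le k$, the pair $(k_1, k_2)$ is one of the pairs ranged over in the maximum defining $\overline{LCC}(v,k)$ in Definition~\ref{def:upper_bound_LCC}, so it suffices to show that $LCC_{\overbar{G}}(v)$ is at most the $(k_1,k_2)$ term. Because the added edges are new (not already in $E$), the $k_1$ edges of $F$ incident to $v$ attach $v$ to $k_1$ distinct vertices outside $N_G(v)$; hence $d_{\overbar{G}}(v) = d + k_1$ and the denominator of $LCC_{\overbar{G}}(v)$ is exactly $C(d + k_1, 2)$, matching the denominator of that term.

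Next I would bound the numerator of $LCC_{\overbar{G}}(v)$, namely the number of edges of $\overbar{G}$ with both endpoints in $N_{\overbar{G}}(v)$. I split $N_{\overbar{G}}(v)$ into the $d$ ``old'' neighbors $N_G(v)$ and the $k_1$ ``new'' neighbors $u_1,\dots,u_{k_1}$, and I split the relevant edges according to whether they belong to $E$ or to $F$ (two disjoint sets, since $F$ adds only fresh edges). Among the $E$-edges: at most $n_v$ have both endpoints among the old neighbors (this is exactly $n_v = LCC_G(v)\cdot C(d,2)$); at most $k_1 \cdot d$ join some $u_i$ to an old neighbor (each $u_i$ can attach to at most the $d$ old neighbors); and at most $C(k_1,2)$ have both endpoints among the new neighbors. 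Among the $F$-edges: any such edge lying within $N_{\overbar{G}}(v)$ is not incident to $v$ (as $v \notin N_{\overbar{G}}(v)$), and there are only $k - k_1 = k_2$ edges of $F$ not incident to $v$, so at most $k_2$ of them qualify. Summing the four disjoint contributions gives at most $n_v + k_2 + k_1 d + C(k_1,2)$ edges within $N_{\overbar{G}}(v)$, whence $LCC_{\overbar{G}}(v) \le \frac{n_v + k_2 + k_1 d + C(k_1,2)}{C(d + k_1, 2)} \le \overline{LCC}(v,k)$.

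The only delicate point -- the one I would write out most carefully -- is the disjointness bookkeeping: an edge of $F$ that happens to join a new neighbor to an old one, or two new neighbors, must be charged to the ``at most $k_2$'' bucket and not to the ``at most $k_1 d$'' or ``at most $C(k_1,2)$'' buckets, which by construction count only edges of $E$. With that convention the four upper bounds are over pairwise disjoint edge sets and therefore add. A trivial loose end is the degenerate case $d + k_1 \le 1$, where $LCC_{\overbar{G}}(v)$ is $0$ (or undefined) by convention and the inequality holds vacuously; I would dispose of it in a sentence. No step requires anything beyond elementary counting, so I do not anticipate a real obstacle.
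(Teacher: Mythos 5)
Your proposal is correct and follows essentially the same route as the paper's proof: both fix $k_1$ as the number of $F$-edges incident to $v$, bound the new edges inside $N_{\overbar{G}}(v)$ by $k_2$, bound the pre-existing edges touching the new neighbors by $k_1 d_G(v) + C(k_1,2)$, and then observe that the resulting ratio is one of the terms in the maximum defining $\overline{LCC}(v,k)$. Your version is slightly cleaner in taking $k_2 = k - k_1$ outright (the paper defines $k_2$ as the exact count and pads with a slack term $k_3$), and your explicit disjointness bookkeeping tightens a point the paper leaves implicit, but there is no substantive difference.
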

\begin{proof}
Let $k_1$ and $k_2$ denote the numbers of intervention edges connecting to $v$ and any two neighbors of $v$, respectively. After intervention, $k_1 + k_2 \leq k$, and $LCC_{\overbar{G}}(v) = \frac{n_v + k_2 + y}{C(d_G(v) + k_1, 2)}$, where $y$ is the number of edges between the new neighbors via the $k_1$ new edges and the original neighbors of $v$ in $G$, and $y \leq d_G(v) \times k_1 + C(k_1, 2)$. Thus, $LCC_{\overbar{G}}(v) \leq \frac{n_v + k_2 + d_G(v) \times k_1 + C(k_1, 2)}{C(d_G(v) + k_1, 2)}$. Let $k_3 = k - k_1 - k_2$.
Then, we obtain $LCC_{\overbar{G}}(v)\leq$ $\frac{n_v + (k_2 +k_3)+ k_1 \times d_G(v) + C(k_1, 2) }{C(d_G(v) + k_1, 2)}$ \\
$\leq  \max_{k_1+k_2= k} \{ \frac{n_v + k_2 +k_1 \times d_G(v) + C(k_1, 2)}{C(d_G(v) + k_1, 2)}\}$  $=\overline{LCC}(v,k)$. 
\end{proof}

According to the above theorem, ALC first derives $LCC_{G}(v)$ and $\overline{LCC}(v,k)$ as a pre-processing step of OISA before intervention. Accordingly, PONF does not update the LCC of a node $v$ if the intervention edge neither connects to $v$ nor spans $v$'s two neighbors, and $v$ is not going to be $m$ and $u$ in the next iteration since $\overline{LCC}(v,k)$ is smaller than the current maximal LCC potential to be $m$ and $u$ in the next iteration. Therefore, Theorem~\ref{lemma:upper_bound_relation} enables OISA to effectively skip the LCC updates of most nodes.

\begin{theorem}
The time complexity of OISA is $O(n_l \times k \times (n_m \times (|V| + |E|)))$, where $n_l$ is the number of targeted LCCs, and $n_m$ is the number of nodes with the largest LCC
(proof in Appendix~\ref{proof:oisa_complexity}).
\label{thm:oisa_complexity}
\end{theorem}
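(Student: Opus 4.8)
The plan is to traverse Algorithm~\ref{alg:oisa} step by step and charge the cost of every step to one of the three factors $n_l$, $k$, and $n_m(|V|+|E|)$. First I would observe that the outer \textbf{while} loop is driven entirely by the targeted LCCs $l_j$ enumerated by EORE, so by construction it executes at most $n_l$ times; an iteration with $k_G \geq k$ is abandoned before PONF is invoked, so it costs only the per-iteration preamble and does not change the count. The preamble of one \textbf{while} iteration computes $k_t$ for every $t\in T$, which by Lemma~\ref{lemma:edge_lowerbound} is obtained directly (a closed-form solution of Eq.~\ref{eq:edge_lowerbound}, hence $O(1)$ per node), and then forms $k_G=\frac12\sum_{t\in T}\max\{k_t,\omega_d-d_G(t)\}$, for a total of $O(|T|)=O(|V|)$. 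I would also state up front that $b_G(\cdot)$, $c_G(\cdot)$, $d_G(\cdot)$, the counts $n_v$, and the bounds $\overline{LCC}(v,k)$ of Definition~\ref{def:upper_bound_LCC} are produced once in a preprocessing phase, so the theorem concerns the iterative phase, and I would argue (or simply fold into the input) that this preprocessing is of lower order, matching the convention already implicit when MISS refers to $b_G(\cdot)$ and $c_G(\cdot)$.

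Next I would bound one pass of the inner \textbf{for} loop of PONF, which runs exactly $k$ times and thus supplies the factor $k$. Choosing $m$ requires one scan of the LCCs of $T$ to find the maximum; the (at most) $n_m$ nodes tied at that value are the candidates, and for each of them PONF evaluates its \emph{optionality} (Definition~\ref{def:optionality}) together with the option set $U_m$, after which $u$ is the largest-LCC member of $U_m$ broken by the MISS rule of Eq.~\ref{eq:miss} using the precomputed $b_G,c_G,d_G$. The crux of the argument is that a single optionality evaluation is $O(|V|+|E|)$: one BFS from the candidate yields all hop distances (every node at distance $\geq 3$ automatically enters $U_m$), and for a distance-$2$ node $u_t$ the feasibility test ``adding $(m,u_t)$ raises no common neighbor's LCC above $\tau$'' is an $O(1)$ comparison per common neighbor against the stored $n_v$ and $d_G(v)$; since such a common neighbor must lie in $N_G(m)$, the number of these comparisons summed over all $u_t\in T$ is at most $\sum_{v\in N_G(m)}d_G(v)\leq 2|E|$. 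Hence picking $m$ and then $u$ costs $O(n_m(|V|+|E|))$ per \textbf{for} iteration.

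The remaining work in a \textbf{for} iteration is recomputing LCCs after inserting $(m,u)$. I would argue that only $LCC_{\overbar{G}}(m)$, $LCC_{\overbar{G}}(u)$, and the LCCs of the common neighbors of $m$ and $u$ can change; updating the first two reduces to computing $|N_G(m)\cap N_G(u)|$ in $O(d_G(m)+d_G(u))$ time, and each common neighbor $w$ sees $n_w$ increase by exactly $1$, an $O(1)$ correction. So even without ALC this step is $O(|V|+|E|)$, and ALC --- which uses the precomputed bound $\overline{LCC}(v,k)$ guaranteed by Theorem~\ref{lemma:upper_bound_relation} to skip any $v$ whose bound falls below the current maximal LCC --- only shrinks the constant. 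Assembling the pieces: one \textbf{for} iteration is $O(n_m(|V|+|E|))$, which dominates the $O(|V|)$ preamble of its \textbf{while} iteration; one \textbf{while} iteration is therefore $O(k\cdot n_m(|V|+|E|))$; and over the $\leq n_l$ iterations of the \textbf{while} loop the total is $O(n_l\cdot k\cdot n_m(|V|+|E|))$, the claimed bound.

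The step I expect to be the main obstacle is the linear-time bound on a single optionality evaluation: one has to argue carefully, with the right data structures (hashed adjacency sets together with the maintained $n_v$), that the distance computation, the formation of $U_m$, and --- most delicately --- the per--common-neighbor $\tau$-feasibility checks all amortize into $O(|V|+|E|)$ rather than the naive $O(|T|\cdot|V|)$ one gets by treating each candidate edge independently. A secondary point to pin down is the preprocessing accounting for the betweenness and closeness of $G$ and the $\overline{LCC}$ table; I expect the cleanest route is to state these as a one-time cost dominated in the paper's parameter regime, which keeps the analysis focused on the three nested loops that produce the stated expression.
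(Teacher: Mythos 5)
Your proposal is correct and follows essentially the same decomposition as the paper's proof: at most $n_l$ outer iterations, $k$ PONF iterations each dominated by the $O(n_m(|V|+|E|))$ cost of evaluating optionality for the tied maximum-LCC candidates, with the per-edge LCC update and the $O(|T|)$ preamble absorbed as lower-order terms. You actually supply more detail than the paper on why a single optionality evaluation is linear (the BFS plus the amortized $\tau$-feasibility checks over common neighbors), and your handling of the betweenness/closeness preprocessing as a set-aside cost matches the paper's footnoted convention.
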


\section{Experimental Results}
\label{sec:experiment}

We evaluate the effectiveness and efficiency of CRPD and OISA by experimentation in Sections~\ref{subsec:crpd_evaluation} and \ref{subsec:oisa_evaluation}, respectively. Also, to show the feasibility of using OISA in real-world setting, we present an empirical study in Section~\ref{sec:human_study}.\footnote{The IRB number is 10710HE072.} The study has been inspected by 11 clinical psychologists and professors in the field.\footnote{They are from California School of Professional Psychology, Taipei City Government Community Mental Health Center, National Taipei University of Nursing and Health Science, National Taiwan University etc.\label{fnlabel}}

\begin{table}[t]
\centering
\footnotesize
\caption{Dataset statistics}
\label{table:dataset}
\setlength{\tabcolsep}{1mm}{
\begin{tabular}{|lrrr|lrrr|}
\hline
dataset & $|V|$ & $|E|$ & {\footnotesize ave. LCC} & dataset &$|V|$ & $|E|$ & {\footnotesize ave. LCC} \\ \hline\hline
\textit{CPEP} \cite{gest14teacher}& 226 & 583 & 0.71&\textit{Youtube} \cite{SNAP} & 1.1M   & 3.0M   & 0.08\\ \hline
\textit{Facebook} \cite{viswanath2009evolution} & 60.3K& 1.5M & 0.22 & \textit{Amazon} \cite{SNAP} & 33.5K & 92.6K & 0.40 \\ \hline
\textit{Flickr} \cite{mislove2007measurement}  & 1.8M & 22.6M  & 0.25 &\textit{Cond-Mat} \cite{SNAP} & 23.1K & 93.5K  & 0.63  \\ \hline
\end{tabular}
}
\end{table}


For simulation, since there is no prior work on lowering LCCs while ensuring their betweenness, closeness and degree, we compare the proposed CRPD and OISA with five baselines: 1) Budget Utility Maximization (\textit{BUM}): BUM iteratively adds an edge between a targeted node and the node with the largest LCC, while not violating the constraints;  2) Surrounding Impact Minimization (\textit{SIM}): SIM iteratively adds an edge from a targeted node to the node with the maximal number of hops from it, while not violating the constraints; 3) Enumeration (\textit{ENUM}): ENUM exhaustively finds the optimal solution. 
4) Edge Addition for Improving Network Centrality (\textit{EA})~\cite{papagelis2015refining}: EA iteratively adds an edge with the largest increment on closeness centrality, and 5) Target-oriented Edge Addition for Improving Network Centrality (\textit{TEA})~\cite{crescenzi2016greedily}: TEA iteratively adds an edge with the largest increment on closeness centrality for the targeted nodes. 
6) Greedy algorithm for Dyad scenario (\textit{GD})~\cite{wilder2018optimizing}: GD iteratively adds an edge with the largest increment on influence score, where the influence score is calculated in a similar way to PageRank. 
All algorithms are implemented on an HP DL580 G9 server with four Intel Xeon E7-8870v4 2.10 GHz CPUs and 1.2 TB RAM. 
Six real datasets are evaluated in the experiments. The first one, \textit{CPEP}~\cite{gest14teacher}, contains the complete social network of the students in 10 classrooms of several public elementary schools in the US. The other five large real social network datasets, collected from the Web, are \textit{Facebook}, \textit{Flickr}, \textit{Youtube}, \textit{Amazon}, and \textit{Cond-Mat}. Some statistics of datasets used in our experiments are summarized in Table~\ref{table:dataset}. The default $\tau$, $\omega_b$ and $\omega_c$ are set to $0.12$, $0.01$, and $0.1$, respectively.

\vspace{-9pt}
\subsection{Evaluation of CRPD for NILD-S}
\label{subsec:crpd_evaluation}

\begin{figure*}[t]
\centering
\subfigure[$t$'s LCC on \textit{CPEP}] {\includegraphics[width=1.35 in]{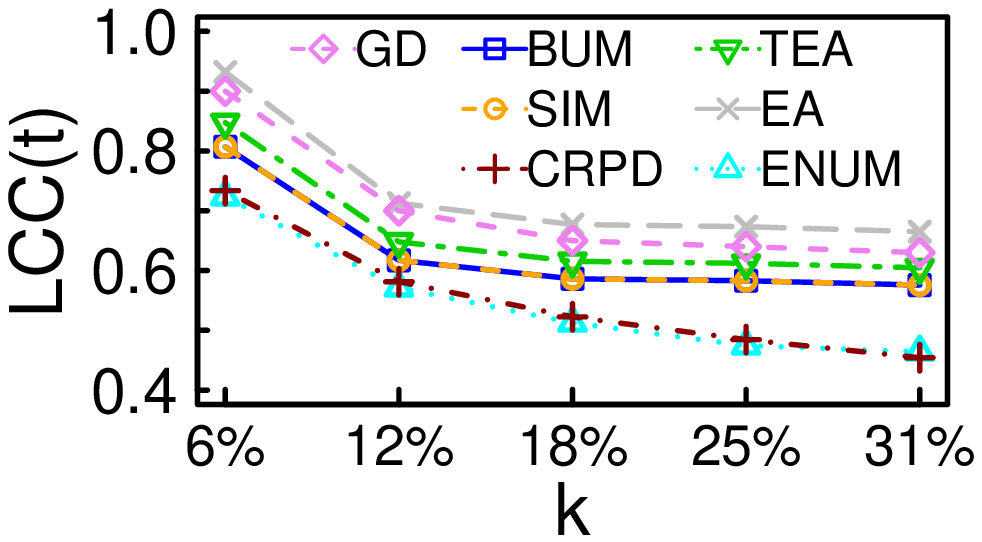}}  
\subfigure[$t$'s LCC on \textit{Flickr}] {\includegraphics[width=1.35 in]{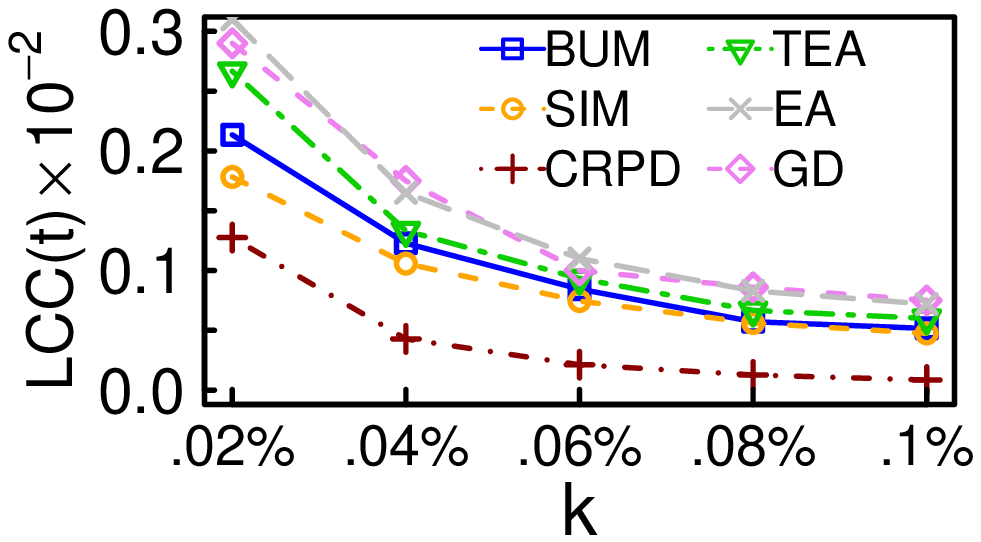}}
\subfigure[Time on \textit{Flickr}] {\includegraphics[width=1.35 in]{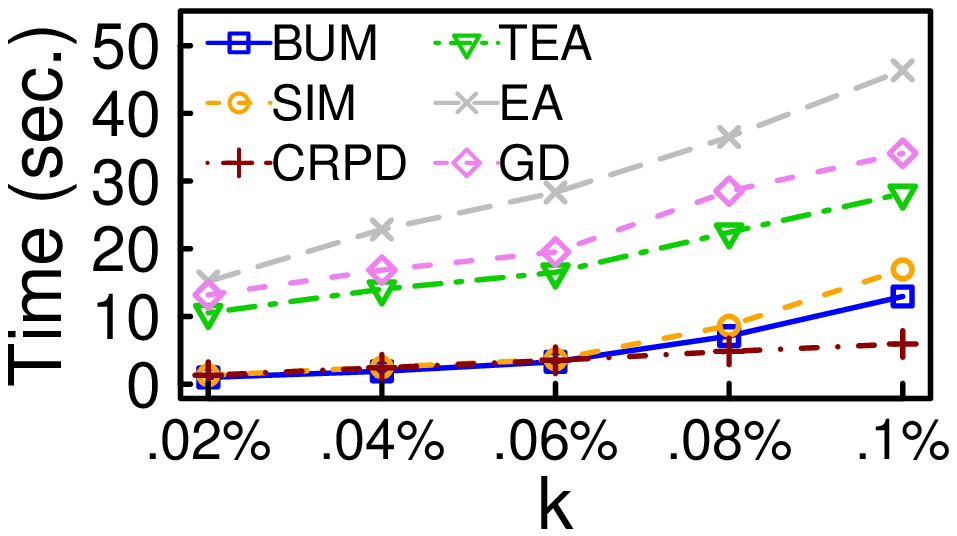}}
\subfigure[$t$'s betweenness on \textit{Flickr}] {\includegraphics[width=1.35 in]{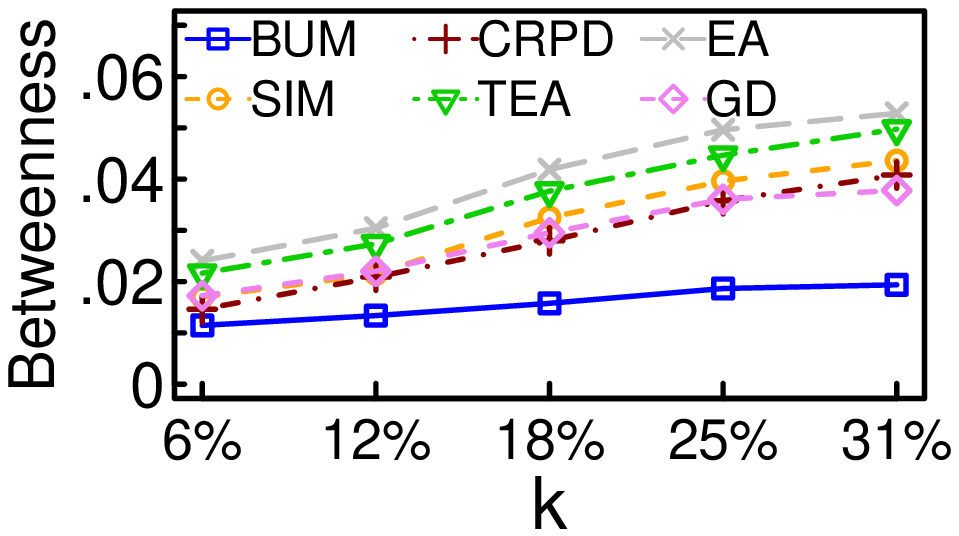}}  
\subfigure[$t$'s closeness on \textit{Flickr}] {\includegraphics[width=1.35 in]{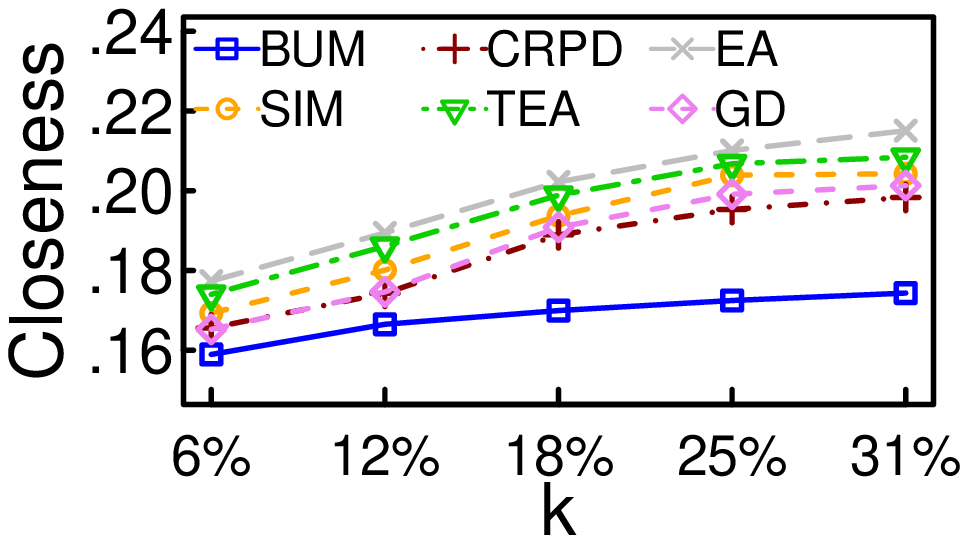}}
\subfigure[Varying $\tau$ on \textit{Flickr}]{\includegraphics[width=1.35 in]{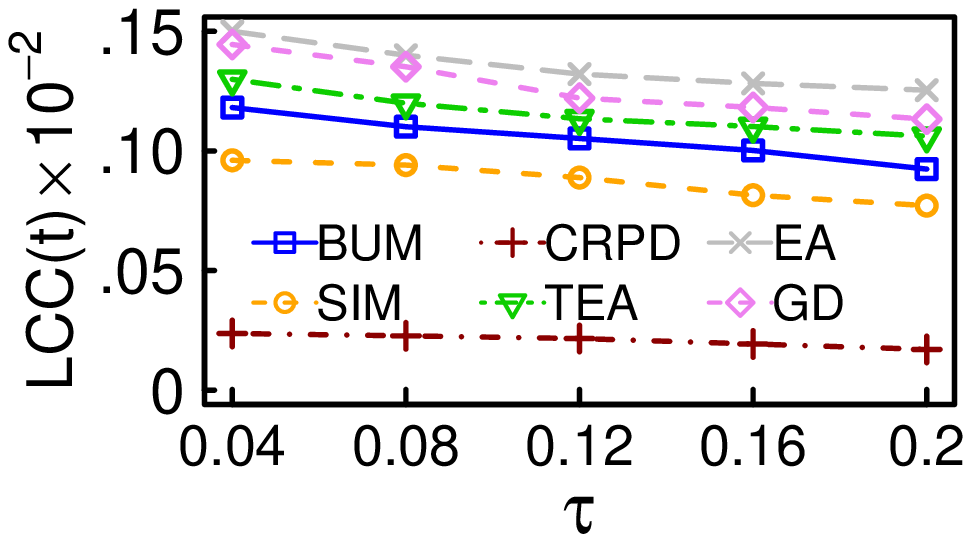}} 
\subfigure[Varying $\tau$ on \textit{Flickr}]{\includegraphics[width=1.35 in]{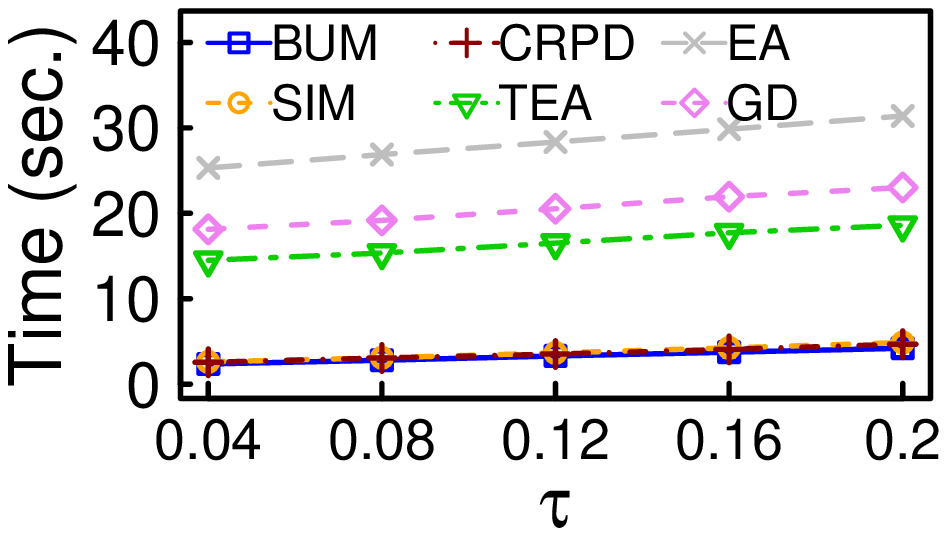}} 
\subfigure[Varying $\omega_{b}$ on \textit{Flickr}]{\includegraphics[width=1.35 in]{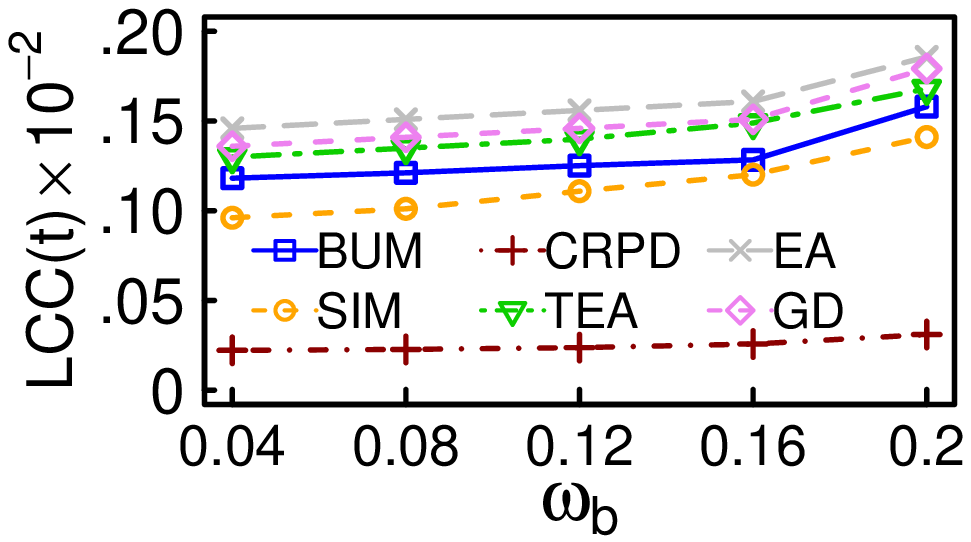}}
\subfigure[$t$'s LCC on \textit{Flickr}]{\includegraphics[width=1.35 in]{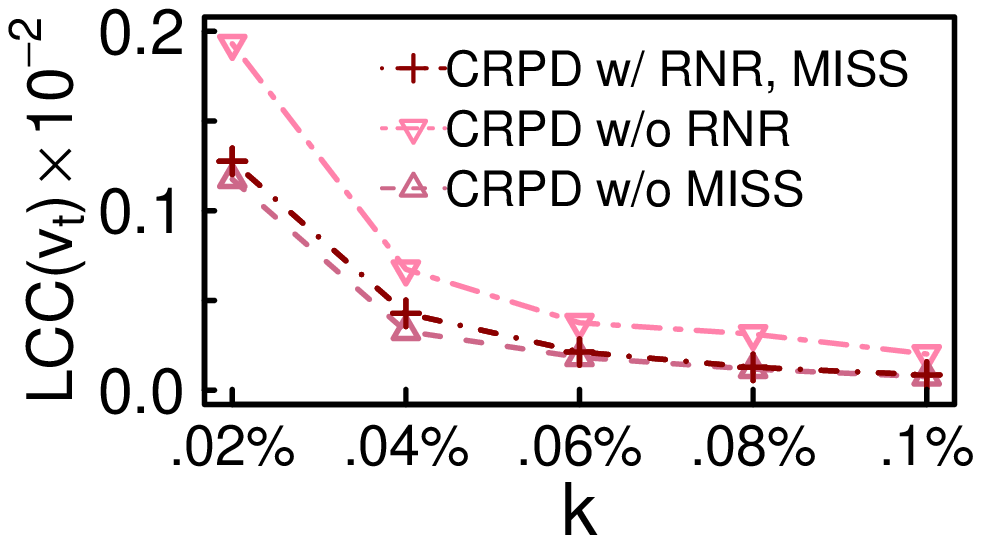}}
\subfigure[$t$'s closeness on \textit{Flickr}]{\includegraphics[width=1.35 in]{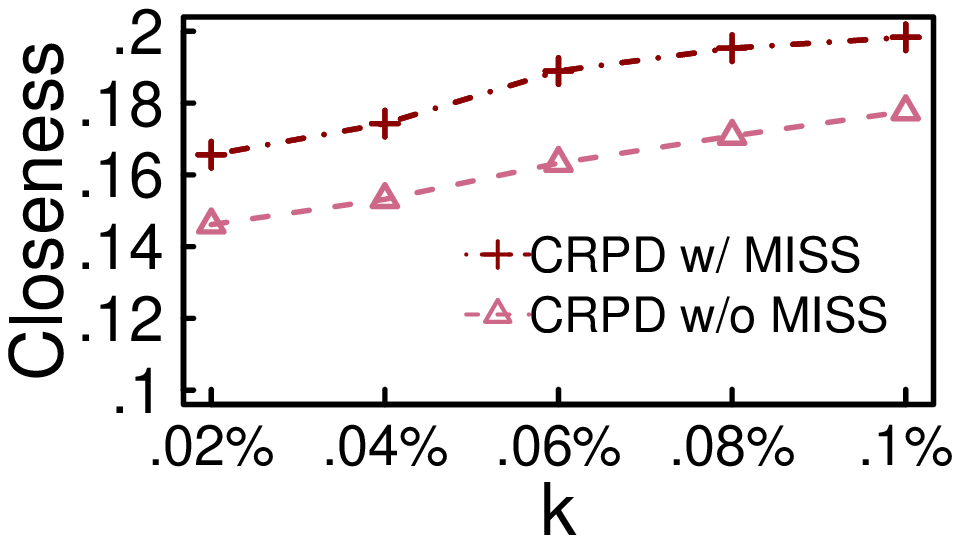}}
\caption{Sensitivity tests for NILD-S}
\label{fig:tisw_k}
\vspace{-10pt}
\end{figure*}

In the following, we first compare CRPD with baselines by varying $k$. For \textit{Facebook} and \textit{Flickr}, ENUM does not return the solutions in two days even when $k = 10$, and thus it is not shown. 
Figures~\ref{fig:tisw_k}(a)-(e) compare the results of various $k$ on \textit{CPEP} and \textit{Flickr} under default $\tau$, $\omega_b$ and $\omega_c$.
For \textit{CPEP}, $k$ is set to 6\%, 12\%, 18\%, 25\%, and 31\% of the number of nodes (i.e., 1, 2, 3, 4, and 5 intervention edges). 
For \textit{Flickr}, $k$ is set to 0.02\%, 0.04\%, 0.06\%, 0.8\%, and 0.1\% of nodes (i.e., 400, 800, 1200, 1600, and 2000 intervention edges). 
The $t$ is randomly chosen from the nodes with LCC larger than 0.8, and we report the average of 50 trials.
Figures~\ref{fig:tisw_k}(a) and \ref{fig:tisw_k}(b) show that with $k$ increasing, $t$'s LCC decreases as more edges are connected with $t$ to reduce its LCC. Also, CRPD outperforms other baselines as it carefully examines the candidates' structure to avoid increasing the number of edges among $t$'s neighbors.
Figure~\ref{fig:tisw_k}(c) shows the running time of OISA is comparable with simple baselines BUM and SIM while achieving better LCC on \textit{Flickr}. 

Figures~\ref{fig:tisw_k}(d) and \ref{fig:tisw_k}(e) show the betweenness and closeness of $t$ after adding the intervention edges to \textit{Flickr}. 
EA and TEA obtain the largest betweenness and closeness, but their maximal LCCs are not effectively reduced.  
SIM also achieves large betweenness and closeness since it selects the node farthest from $t$ as $u$ and creates a shortcut between them, but the maximal LCC of SIM does not significantly decrease. 
GD achieves smaller betweenness and closeness than SIM since it generally selects a node $u$ with a large PageRank score, but its improvement on $t$'s betweenness and closeness is smaller than SIM. 
BUM induces the smallest betweenness and closeness of $t$ since it selects $u$ with the largest LCCs, and those chosen $u$ are inclined to be near each other. In contrast, CRPD achieves comparable performance with EA and TEA, showing that CRPD can effectively improve not only LCC but also other network characteristics.

Next, we conduct a series of sensitivity tests on $\tau$, $\omega_b$, and $\omega_c$, and show the component effectiveness. The results on \textit{CPEP} and \textit{Facebook} are similar to \textit{Flickr} and thus not shown here.

\noindent \textbf{Varying $\tau$. }Figure~\ref{fig:tisw_k}(f) compares LCC of $t$ with different $\tau$ on \textit{Flickr}, under default 
$\omega_{b}$ and $\omega_{c}$.
As shown, the LCC of $t$ decreases because a large $\tau$ allows more nodes to be candidates and thereby tends to generate a better solution. 
Figure~\ref{fig:tisw_k}(g) shows that the running time of CRPD and baselines on \textit{Flickr} increases as $\tau$ grows because more candidates are considered. 

\noindent \textbf{Varying $\omega_b$ and $\omega_c$. } Figure~\ref{fig:tisw_k}(h) compares the LCC of $t$ with different $\omega_{b}$ on \textit{Flickr}, where $k=0.04\%$ (800 intervention edges). The trend of $\omega_{c}$ is similar to that of $\omega_{b}$ and thus not shown here. LCC of $t$ slightly grows with increasing $\omega_{b}$ and $\omega_{c}$, because large $\omega_{b}$ and $\omega_{c}$ require CRPD and baselines to allocate more edges for the betweenness and closeness, instead of focusing on reducing LCC.
Moreover, LCC of $t$ obtained by TEA, EA, and CRPD increases slower than one obtained by BUM, SIM and GD, since the edges selected by TEA and EA maximize the closeness centrality while fulfilling the constraints of $\omega_{b}$ and $\omega_{c}$, whereas the edges from CRPD effectively increase the betweenness and closeness of $t$.

\noindent \textbf{Component Effectiveness. }
Figure~\ref{fig:tisw_k}(i) compares LCC of $t$ obtained by CRPD with and without RNR, and Figure~\ref{fig:tisw_k}(j) compares the closeness of CRPD with and without MISS on \textit{Flickr}. The trend of betweenness is similar to closeness and thereby not shown here. As shown, while CRPD with and without MISS shares similar LCCs, CRPD with MISS achieves a greater closeness because choosing $u$ with larger closeness/betweenness tends to increase $t$'s betweenness and closeness.
The results of CRPD on \textit{CPEP} and \textit{Facebook} share a similar trend and thus are not shown here.

\vspace{-3pt}
\subsection{Evaluation of OISA for NILD-M}
\label{subsec:oisa_evaluation}

In the following, we evaluate the proposed OISA by randomly choosing 20\% of nodes in $V$ as $T$ from the nodes with the top-40\% maximum LCCs. The results are the average of 50 trials.



\begin{figure*}[t]
\centering
\subfigure[Maximal LCC on \textit{CPEP}] {\includegraphics[width=1.35 in]{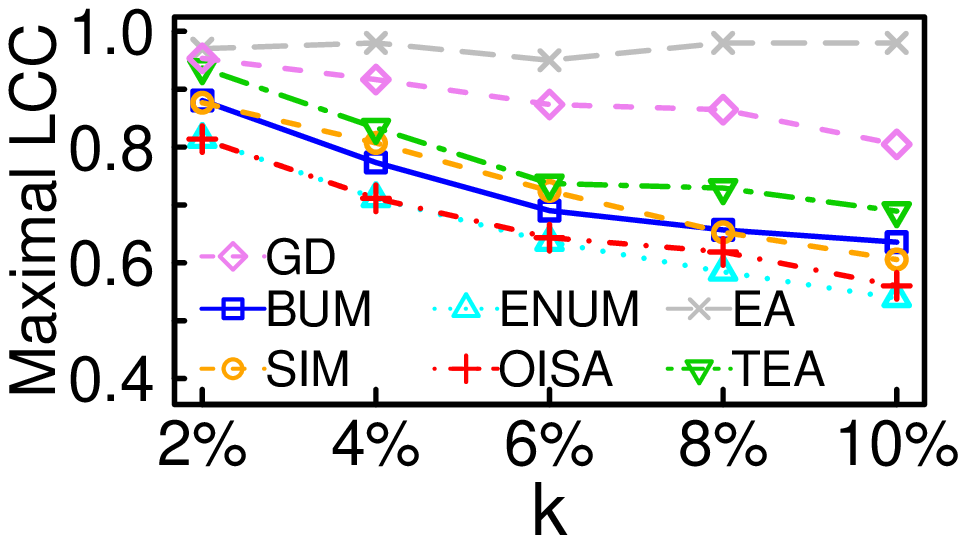}}  
\subfigure[Running time on \textit{CPEP} ]{\includegraphics[width=1.35 in]{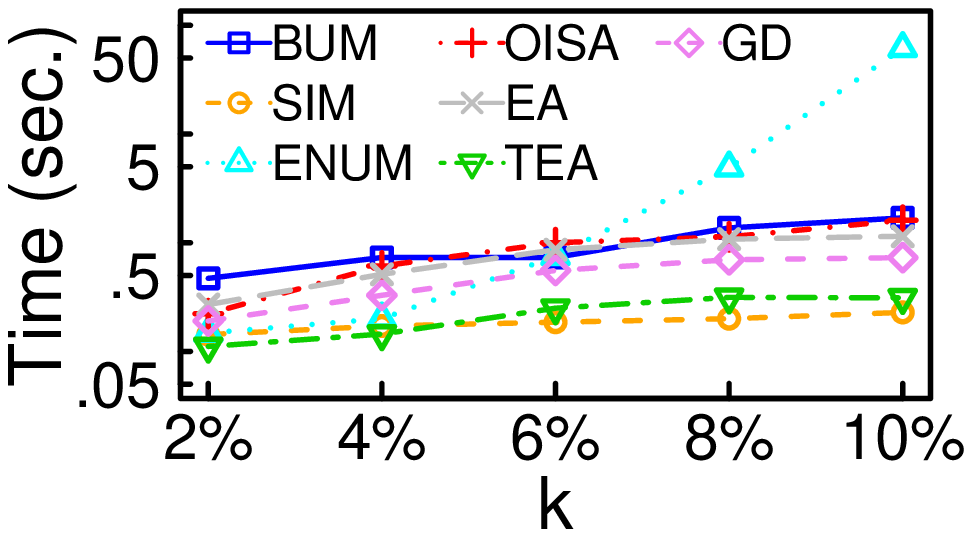}}  
\subfigure[LCC change on \textit{CPEP}]{\includegraphics[width=2.8 in]{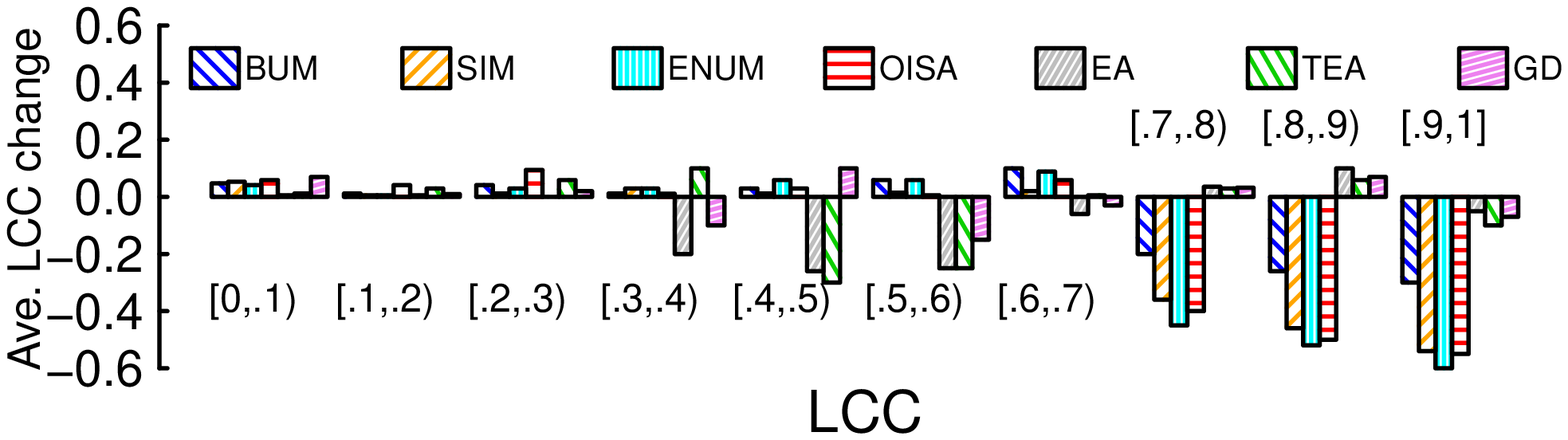}} 
\subfigure[Maximal LCC on \textit{Flickr}] {\includegraphics[width=1.35 in]{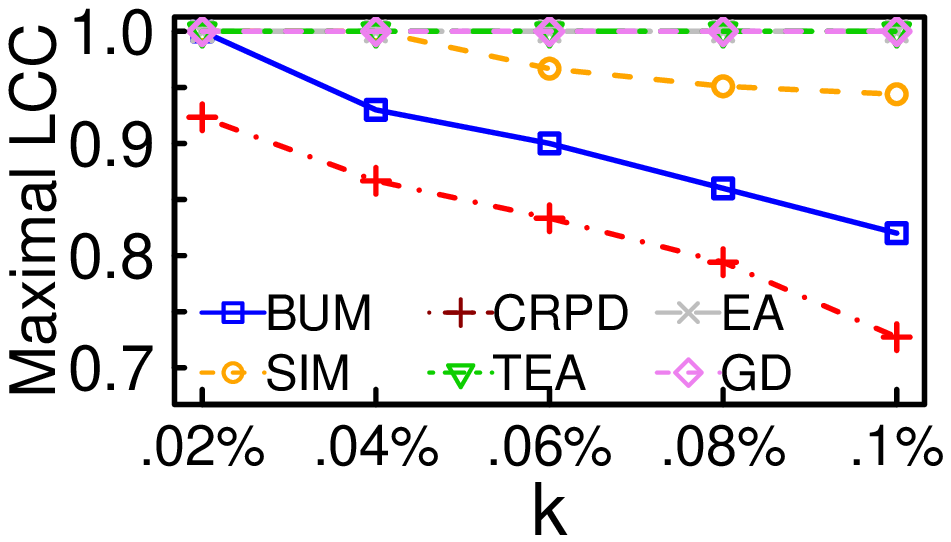}}
\subfigure[Time on \textit{Flickr}]{\includegraphics[width=1.35  in]{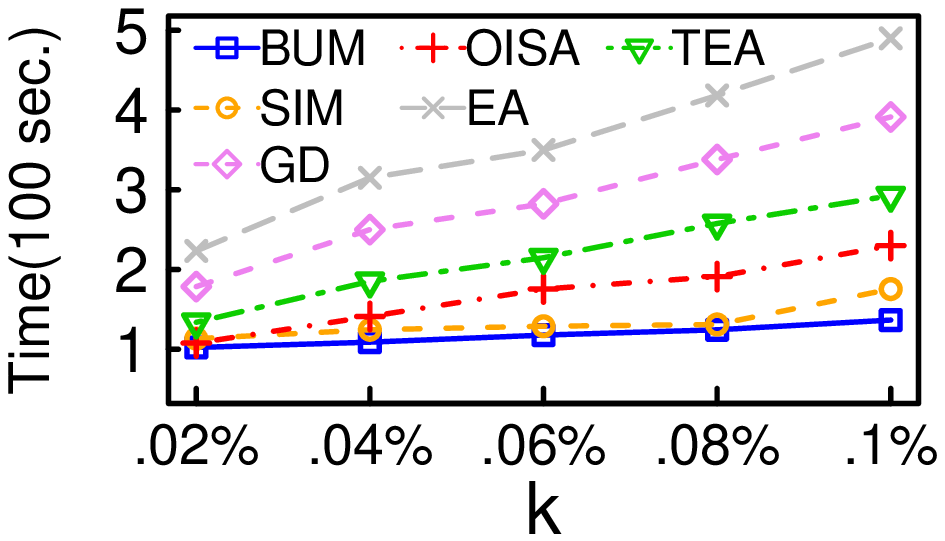}}
\subfigure[Ave. closeness of $T$ on \textit{Flickr}]{\includegraphics[width=1.35 in]{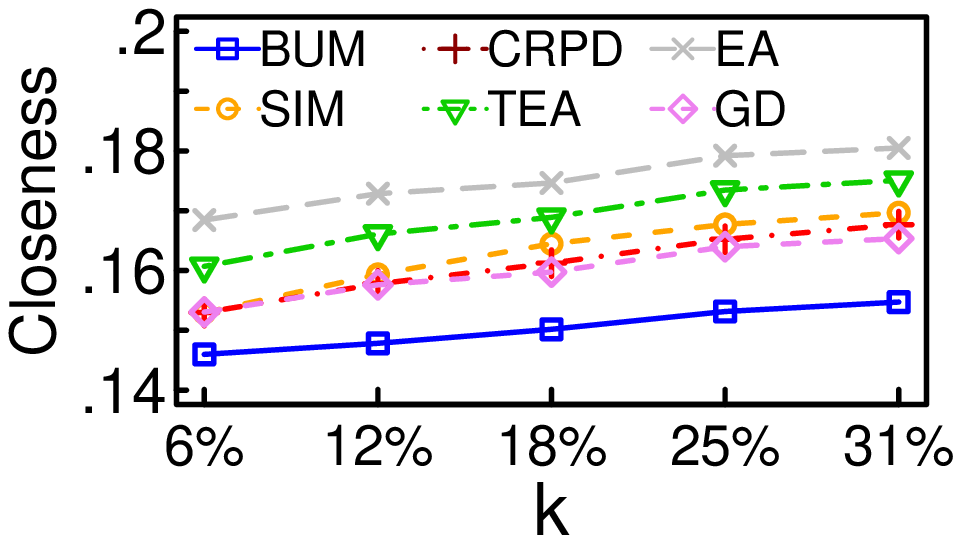}} 
\subfigure[Ave. LCC of datasets] {\includegraphics[width=1.35 in]{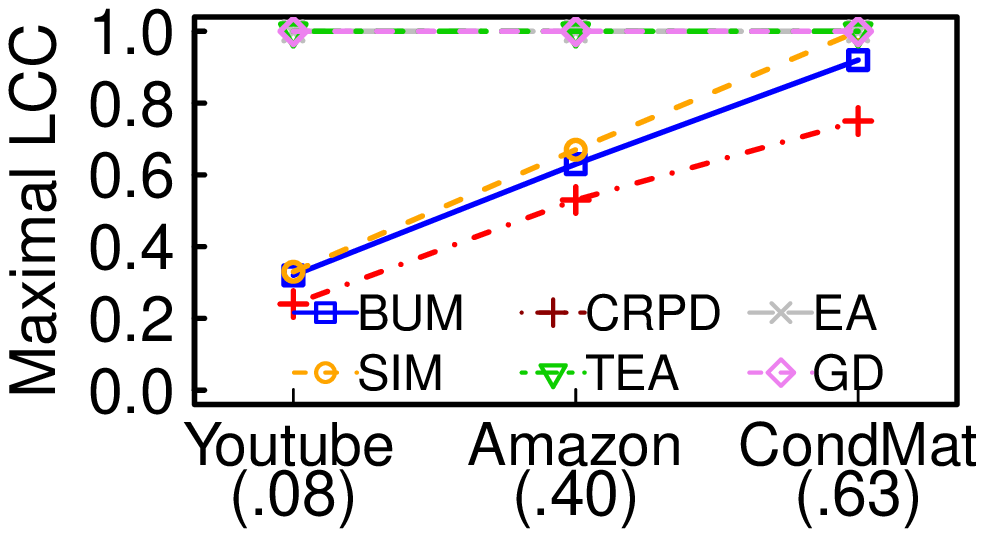}}  
\subfigure[Maximum LCC on \textit{Flickr}] {\includegraphics[width=1.35 in]{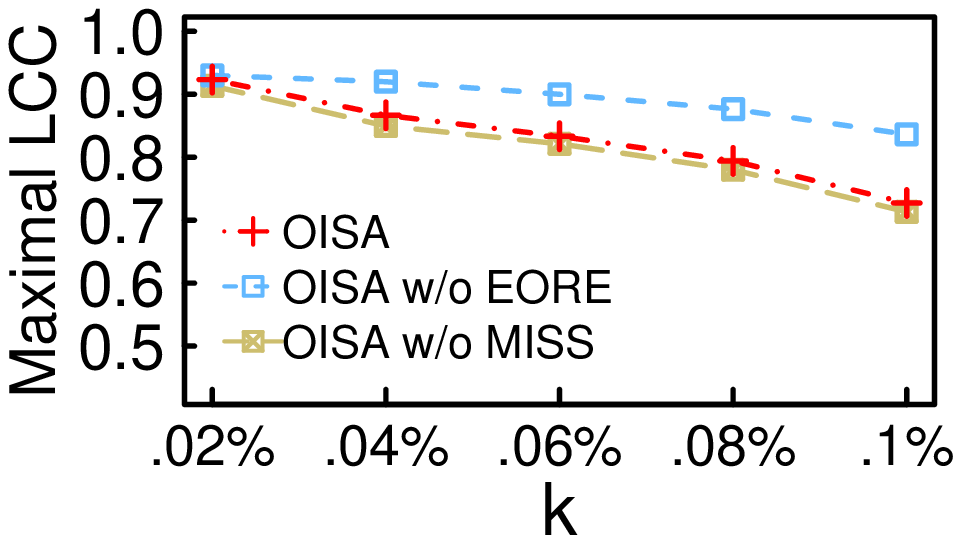}}  
\subfigure[Time on \textit{Flickr}] {\includegraphics[width=1.35 in]{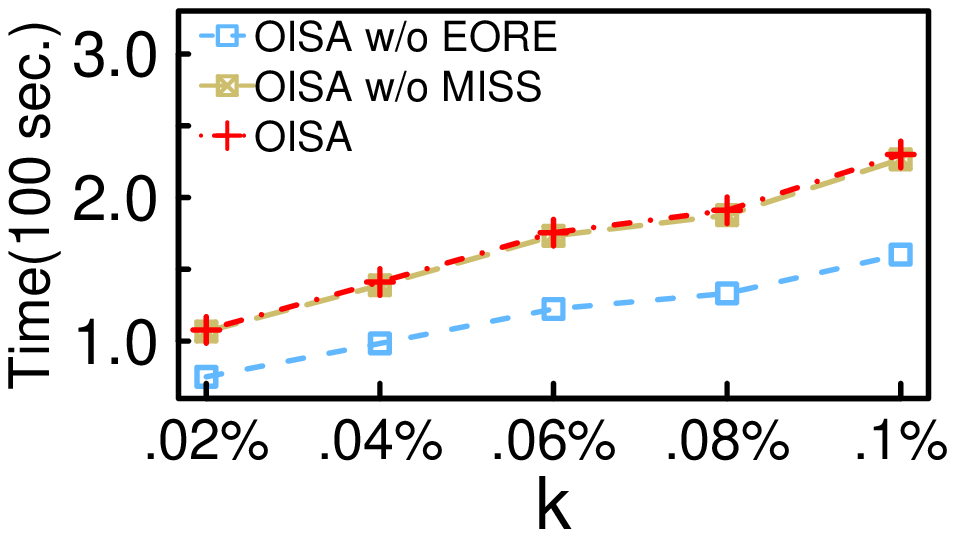}}
\caption{Scalability \& sensitivity tests for NILD-M}
\label{fig:scalability}
\vspace{-10pt}
\end{figure*}

Figures~\ref{fig:scalability}(a)-(c) first compare the effectiveness of all examined approaches on \emph{CPEP} by varying the number of intervention edges $k$ (relative to the edge number in $G$). 
Figure~\ref{fig:scalability}(a) indicates that as $k$ grows, the maximum LCC generally decreases. 
OISA and ENUM significantly outperform BUM, SIM, EA, TEA, and GD because EA, TEA and GD are designed to maximize the closeness centrality and influence score, instead of reducing the maximum LCC.
Also, BUM only examines the LCCs of the nodes, while SIM ignores LCCs and gives the priority to the intervention edges with the maximum numbers of hops. Figure~\ref{fig:scalability}(b) presents the running time of all approaches in the log-scale. OISA and most baselines select $F$ within 1 second. In contrast, the running time of ENUM grows exponentially.

To understand the changes of LCCs in different nodes, we take a closer look at the nodes whose LCC potentially changes, i.e., the terminal nodes of the selected edges $F$ and their neighbors.
Figure~\ref{fig:scalability}(c) presents the average LCC change of the nodes in each LCC range after intervention where $k=6\%$. 
As shown in Figure~\ref{fig:scalability}(c), even though BUM outperforms SIM, the average LCC reduction in the range [0.9, 1]
is smaller than SIM since BUM may connect nearby nodes and increase the LCCs of the common neighbors. It also explains why the maximum LCC achieved by BUM drops slower than SIM, ENUM, and OISA while $k$ increases, i.e., BUM creates lots of targeted nodes with LCCs around 0.6 and 0.7, and BUM needs to make intervention for all of them again to reduce the maximum LCC to become lower than 0.6. In contrast, the behavior of OISA is similar to ENUM in most ranges, and it successfully achieves comparable performance.


Next, Figures~\ref{fig:scalability}(d)-(i) compare all approaches except ENUM on \textit{Flickr}, since ENUM does not return any solution in two days even for $k = 10$. The result on \textit{Facebook} is similar and thereby is not shown here.
As there are nearly 9000 nodes with LCCs as 1 on \textit{Flickr}, the minimal $k$ is 4500 edges, because adding one edge can make intervention for at most two nodes with LCC as 1. 
Thus, $k$ is set to 0.02\%, 0.04\%, 0.06\%, 0.08\%, and 0.1\% of the number of edges (i.e., 4500, 9000, 13500, 18000, and 22500 edges). Figure~\ref{fig:scalability}(d) indicates that OISA significantly outperforms all other baselines under all settings of $k$. BUM is superior to SIM when $k$ is small because the farthest node $u$ selected by SIM usually results in a small LCC and fewer neighbors, i.e., reducing the LCC of $u$ does not help reduce the maximum LCC. Thus, when the number of intervention edges to be added is small, e.g., smaller than the number of targeted nodes in $T$, it is impossible for SIM to reduce the maximum LCC. In contrast, $u$ selected by BUM usually has a large LCC, and thus BUM is able to reduce the maximum LCC with the number of intervention edges around half of the minimal $k$.

\vspace{-1pt}
Figure~\ref{fig:scalability}(e) compares the running time. EA considers every possible edge and thus incurs the largest running time. 
BUM requires the least running time since it only retrieves the nodes with the largest LCC as the terminals of the intervention edges. OISA needs slightly more time but obtains much better solutions since it carefully examines multiple anticipated LCCs, i.e., $l_j$. Also, OISA takes a longer time on \textit{Flickr} since \textit{Flickr} has much more large-LCC nodes, and it is necessary for OISA to derive the optionality for all these nodes. The running time of EA and TEA grows significantly on \textit{Flickr} than on \textit{CPEP}, as EA, TEA and GD need to examine every candidate edge not in $E$, and \textit{CPEP} is denser than \textit{Flickr}. Figure~\ref{fig:scalability}(f) shows that the average closeness of targeted nodes in various $k$. Be noted that the average betweenness of targeted nodes has a similar trend with Figure~\ref{fig:scalability}(f) and thus eliminated here. The difference between OISA and baselines is similar to the case of NILD-S.


Next, we conduct a sensitivity test on the average LCC of networks and show the component effectiveness. The results on \textit{CPEP} and \textit{Facebook} are similar to those of  \textit{Flickr} and thus not shown here.

\noindent \textbf{Varying average LCC. }
We evaluate OISA on \textit{Youtube} (average LCC 0.08), \textit{Amazon} (average LCC 0.40), and \textit{Cond-Mat} (average LCC 0.63) with $k=0.3\%$ of the number of edges in each network, i.e., $k=8963$ for \textit{Youtube}, $k=2778$ for \textit{Amazon}, and $k=280$ for \textit{Cond-Mat}. Figure~\ref{fig:scalability}(g) shows that the maximum LCC is larger when the average LCC of the dataset grows,  
and OISA outperforms other baselines. 

\noindent \textbf{Component effectiveness. }
Figures~\ref{fig:scalability}(h) and \ref{fig:scalability}(i) evaluate the different components in OISA.  
The result indicates that OISA achieves high maximal LCC with smaller computation time.


\subsection{Empirical Study}
\label{sec:human_study}
The empirical study aims at evaluating the utility and feasibility of the proposed network intervention algorithm in real-world settings. The study, spanned over two months, included 8 weekly measurements of psychological outcomes among 424 participants, aged between 18 and 25. 
The participants were university students and employees 
with 638 pre-existing friendship links at the beginning of the study.
Four self-reported standard psychological questionnaires were adopted as indicators, 
including \textit{Beck Anxiety Inventory} (BAI)~\cite{beck1988inventory} for \textit{anxiety}, \textit{Perceived Stress Scale} (PSS)~\cite{cohen1994perceived} for \textit{stress}, \textit{Positive And Negative Affect Schedule} (PANAS)~\cite{watson1988development} for \textit{emotion}, and \textit{Psychological well-being Scale} (PWS)~\cite{sparq} for \textit{well-being}. Table~\ref{table:questionnaires} summarizes some evaluated items in the above questionnaires. 
In PANAS, there are 12 positive emotion terms and 14 negative emotion terms, and the overall score is the total score of 12 positive emotion terms minus the total score of negative emotion terms.  
For \textit{anxiety} and \textit{stress}, higher scores indicate higher levels of anxiety and perceived stress, respectively.  
For \textit{emotion} and \textit{well-being}, higher scores imply better emotion and psychological well-being.

\begin{table}[t]
\caption{Items of the adopted psychological questionnaires}
\label{table:questionnaires}
\footnotesize
\begin{tabular}{|p{0.25\columnwidth}|p{0.65\columnwidth}|}
\hline
Questionnaire & Sampled Items \\ \hline
BAI (anxiety), 
33 items with 0--3 points
& 
\begin{itemize}[leftmargin=0.1in]
\vspace{-5pt}
    \item Numbness or tingling
    \item Unable to relax
    \item Fear of worst happening 
\vspace{-5pt}
\end{itemize}
\\ \hline
PSS (Stress), 
5 sampled items with 0--4 points
& 
\begin{itemize}[leftmargin=0.1in]
    \vspace{-5pt}
    \item In the last month, how often have you been upset because of something that happened unexpectedly?
    \item In the last month, how often have you felt nervous and stressed?
    \item In the last month, how often have you felt that things were going your way?
    \vspace{-5pt}
\end{itemize}
\\ \hline
PANAS (emotion),  
26 sampled items with 1--4 points 
&
\begin{itemize}[leftmargin=0.1in]
    \vspace{-5pt}
    \item Positive emotions: excited, happy, satisfied, calm, relaxed, interested, proud, lively, love, determined, attentive, active
    \item Negative emotions: tired, depressed, sad, angry, anxious, distressed, upset, scared, irritable, ashamed, nervous, tense, hostile, afraid
    \vspace{-5pt}
\end{itemize}
\\ \hline
PWS (well-being), 
18 items with 1--7 points 
& 
\begin{itemize}[leftmargin=0.1in]
    \vspace{-5pt}
    \item I like most parts of my personality.
    \item When I look at the story of my life, I am pleased with how things have turned out so far.
    \item Some people wander aimlessly through life, but I am not one of them.
    \vspace{-5pt}
\end{itemize}
\\ \hline
\end{tabular}
\end{table}


To evaluate the effects of adding friendship links based on different approaches, the participants were randomly assigned to one of the following four groups: three \textit{intervention groups} and one \textit{control group}, with 103 participants in every group.\footnote{Before the study started, participants in the four groups were requested to fill the questionnaires to ensure that there were no statistically significant differences for those questionnaires between every pair of groups.}
Participants in the intervention groups were provided with explicit friendship recommendations suggested by OISA and other two baselines, GD and BUM, respectively. Among the five baselines, GD was chosen because it can be applied to propagate health-related information to prevent obesity~\cite{wilder2018optimizing}.
BUM was chosen because it performs the best among all baselines in Section~\ref{subsec:oisa_evaluation}. In the study, $k$ was set as 28 for OISA, GD, and BUM, whereas $\tau = 0.12$, $\omega_{b} = 0.01$, $\omega_{c} = 0.1$, and $|T| =103$ for OISA. For OISA, GD, and BUM, the recommended edges were suggested by providing specific instructions for the participants to engage in online chatting. 
In contrast, participants in the control group received no intervention and explicit instruction to interact with other participants. Each participant was required to provide responses to the four questionnaires every week for a total of eight times in this study.

An important question in the study is \textit{whether the participants accepted the friendship recommendations or not}. To answer this question, participants 
were also asked the following questions at the end of the study: \\
\noindent \textit{Q1. Do you feel happy chatting with this recommended participant?}
\noindent \textit{Q2. After the study ends, are you willing to chat with this participant?}
\noindent \textit{Q3. If possible, are you willing to become a friend of this participant?}
For Q1, 80.2\% of the participants reported that they felt happy during the chat with the recommended participants. For Q2, 79.9\% of the participants replied that they would be willing to chat with the recommended participants even after the study ended. For Q3, 83.9\% of the participants reported that they would be willing to become a friend of the recommenced participant. Accordingly, participants in this study tended to accept the friend recommendations.

\begin{figure}[t]
\centering
\subfigure[Comparison with baselines] {\includegraphics[width=1.4 in]{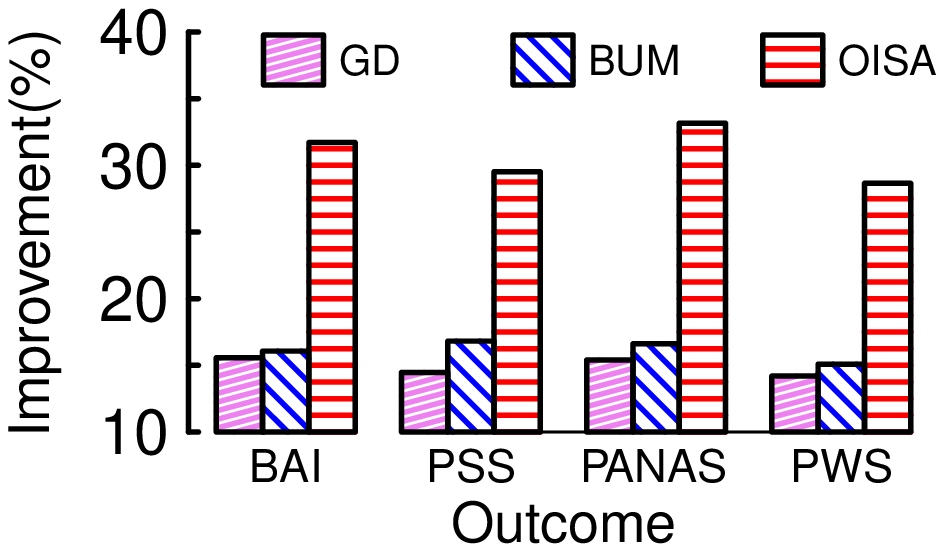}} 
\subfigure[Decrements in LCC (anxiety)] {\includegraphics[width=1.4 in]{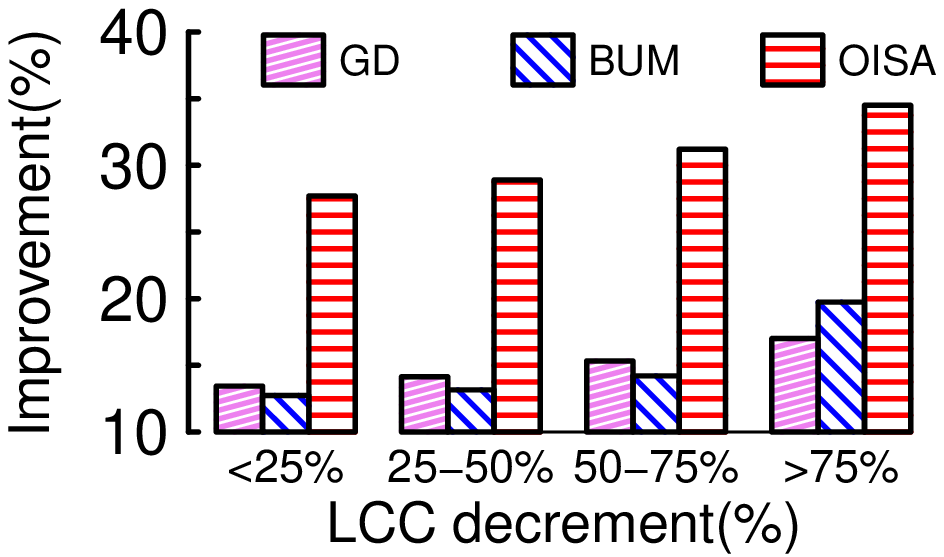}}  
\subfigure[Decrements in LCC (emotion)] {\includegraphics[width=1.4 in]{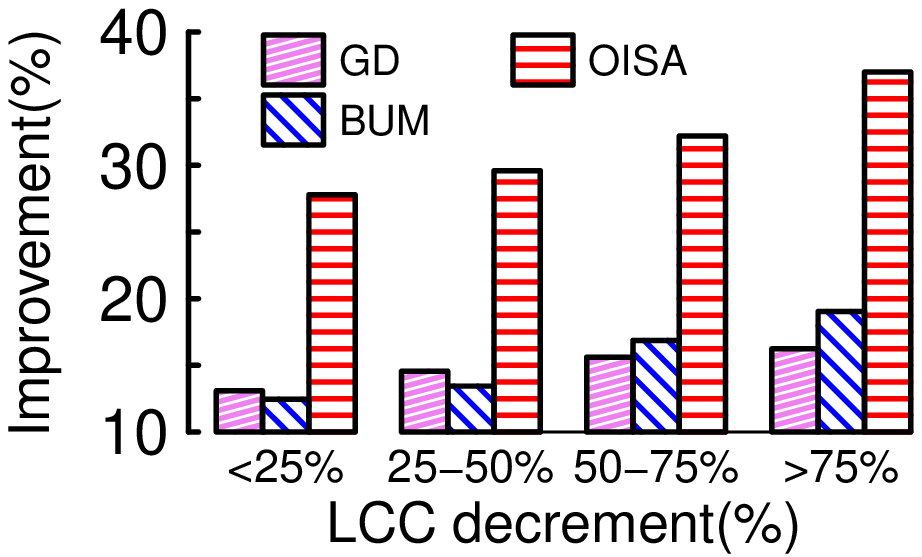}}  
\subfigure[Anxiety over time]{\includegraphics[width=1.4 in]{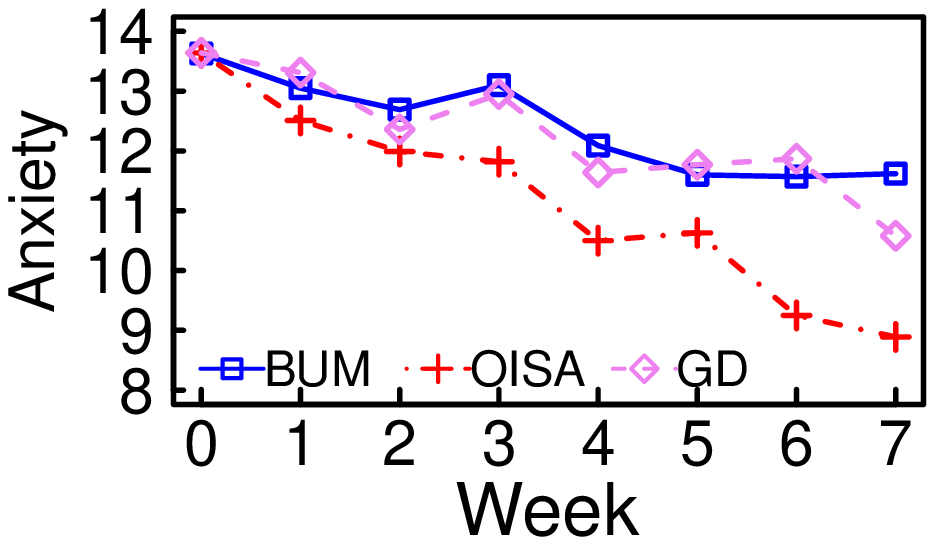}} 
\subfigure[Emotion over time]{\includegraphics[width=1.4 in]{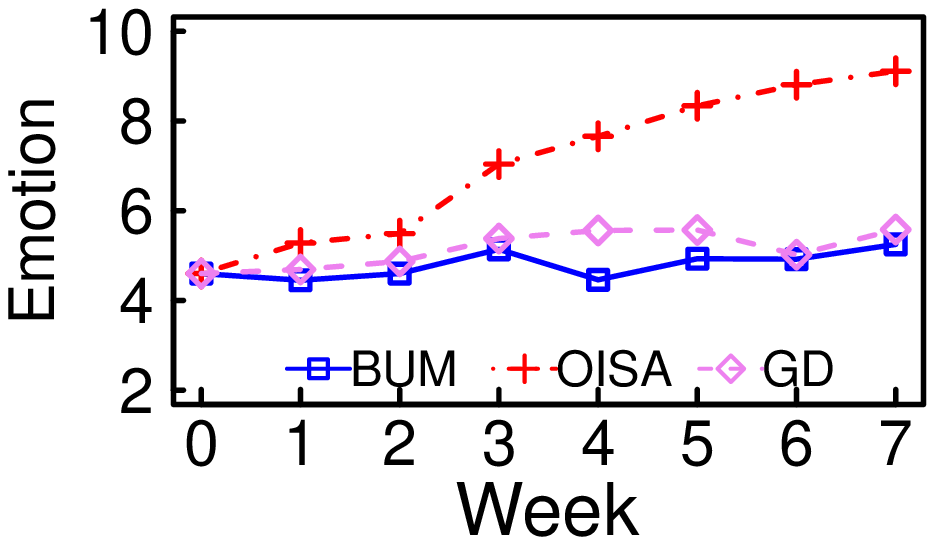}} 
\subfigure[Decrements in LCC]{\includegraphics[width=1.4 in]{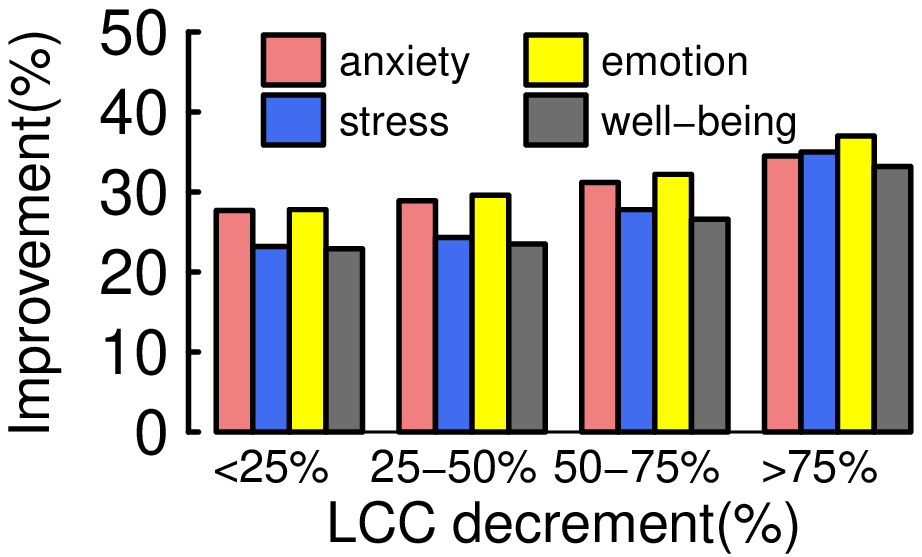}}
\vspace{-3pt}
\caption{Empirical study results}
\label{fig:user_study}
\end{figure}

To evaluate the effects of the intervention, Figure~\ref{fig:user_study} reports the average improvement on each psychological outcome for GD, BUM and OISA. 
As shown in Figure~\ref{fig:user_study}(a), among three intervention groups, OISA outperforms GD and BUM in all four measures, \textit{anxiety}, \textit{stress}, \textit{emotion}, and \textit{well-being}. Figures~\ref{fig:user_study}(b)-(c) further divide participants into four sub-groups according to the percentage of reduction in LCC.\footnote{Results of pressure, and well-being are similar and thus not shown here.} 
Figures~\ref{fig:user_study}(b)-(c) manifest that greater percentages of reduction in LCC are associated with more significant improvements in the psychological outcomes. It validates that new friendships via OISA is able to improve the psychological outcomes in this study. GD is inferior due to a different goal (i.e., maximizing the social influence score). 
Figures~\ref{fig:user_study}(d)-(e) plot the average scores of anxiety and emotion of participants in the intervention groups of GD, BUM, and OISA over time. As shown in Figures~\ref{fig:user_study}(d)-(e), the improvements of OISA is the most significant. 
In contrast, GD and BUM do not consider multiple network characteristics simultaneously. 
Figure~\ref{fig:user_study}(f), which plots the improvement of OISA for each outcome, also manifests that the improvements on anxiety, stress, and emotion with negative feelings are slightly better than that on well-being with only positive feelings, because brains tend to focus on potential threatening and negative emotions~\cite{kensinger2007negative}. In this case, the friend candidate recommended by OISA is able to provide social support to the participants.

We also evaluate OISA, GD and BUM separately with mixed effects modeling~\cite{jaeger2008categorical}, a statistical technique to examine if the intervention group and control group are statistically different.  Specifically, we fitted the model:
\small
\begin{align}
H_{it} =& (\beta_0 + \beta_{0i}) + \beta_1\cdot intervention_{i} + \beta_{2}\cdot time \\ \nonumber
&+ \beta_{3}\cdot intervention_{i}\cdot time  + \epsilon_{it},
\end{align}
\normalsize
where $H_{it}$ represents participant $i$'s emotion at time $t$; $intervention_i$ denotes whether participant $i$ is in the control group ($intervention_i=0$) or the intervention group ($intervention_i=1$); $time$ represents the number of weeks elapsed since the study started. The study starts at time 0. $\beta_0$ is a group intercept term representing the predicted outcome for the control group at time 0. $\beta_{0i}$ is the participant $i$'s deviation in intercept relative to $\beta_0$, which is a random effect in intercept at time 0.
$\beta_1$ --- $\beta_{3}$ are the regression weights associated with $intervention_i$, $time$ and their interaction, respectively. Specifically, $\beta_1$ indicates the difference in predicted psychological outcomes between the control group and intervention group at time 0; $\beta_2$ represents the estimated amount of change in emotion in the control group for each week of elapsed time (i.e., the ``placebo effect'', or the improvements in psychological outcomes shown by the control group). 
Finally, $\beta_3$ is the most important, i.e., the regression coefficient for the $intervention_i \cdot time$ interaction reveals the estimated difference in the amount of change in outcomes reported by the intervention group \textit{relative to} the control group for each week of elapsed time. 
If $\beta_3$ is statistically significantly different from 0, OISA (or other baselines) is able to change the participants' health outcomes with time substantially more than what is expected in the control group. 
Finally, $\epsilon_{it}$ represents the residual error in the negative emotion that cannot be accounted for by other terms.

Table~\ref{table:user_study} presents the results of model fitting of OISA.
Estimates with $p$-values smaller than 0.05 are identified as statistically significantly different from 0. Thus, the two values of $\beta_0$, the predicted psychological outcomes of the control group at time 0, are estimated to be significantly different from 0. The term $\beta_1$ is not significantly different from 0 for all the four outcomes, which validates our random assignment procedure --- indicating that the intervention and control group do not differ substantially in their anxiety, stress, emotion and well-being scores at time 0. Also, the control group does not show substantial changes in all the four outcomes with time, as $\beta_2$ is not significant. Moreover, the values of $\beta_{3}$ are negative and significantly different from zero for anxiety and stress, indicating that the intervention group shows statistically greater decreases in anxiety and stress. Similarly, the values of $\beta_{3}$ are positive and significantly different from zero for emotion and well-being, indicating that the intervention group shows statistically greater increase in emotion and well-being. In other words, the intervention group, whose friendship recommendation is suggested by OISA, shows statistically greater improvement for all the four outcomes.
Also, the more extreme test statistic value ($t$-value) and its corresponding $p$-value, associated with the outcomes, suggest that the intervention have a more systematic effect (i.e., less uncertainty or smaller standard error)~\cite{casella2002statistical}. 
Also, Table~\ref{table:baseline_gd} and Table~\ref{table:baseline_bum} reveal that the model fitting results for BUM and GD are not statistically significant.


\begin{table}[t]
\caption{Results comparing the over-time changes in outcomes of the intervention (OISA) and control groups}
\footnotesize
\begin{tabular}{|p{0.5in}|l|r|r|r|r|l|}
\hline
\scriptsize
 &                   & Estimates & \begin{tabular}[c]{@{}r@{}}Std errors\end{tabular} & df     & $t$-values &  $p$-values \\ \hline
\footnotesize
\multirow{4}{*}{\textit{Anxiety}}        
& $\beta_0$       & 13.64    & 1.17 & 320.64  & 11.70    & <0.0001***            \\ \cline{2-7} 
& $\beta_1$       & -0.69    & 1.65 & 323.10  & -0.42    & 0.6763                 \\ \cline{2-7} 
& $\beta_2$       & -0.09    & 0.15 & 1082.86 & -0.59    & 0.5532           \\ \cline{2-7} 
& $\beta_{3}$     & -0.61    & 0.21 & 1085.94 & -2.97    & 0.0030**         \\ \hline
\multirow{4}{*}{\textit{Stress}} 
& $\beta_0$       & 14.84    & 0.32 & 364.78  & 46.02    & <0.0001***              \\ \cline{2-7} 
& $\beta_1$       &  0.76    & 0.46 & 367.97  &  1.66    & 0.0982           \\ \cline{2-7} 
& $\beta_2$       &  0.07    & 0.04 & 1083.49 &  1.56    & 0.1187            \\ \cline{2-7} 
& $\beta_{3}$     & -0.19    & 0.06 & 1087.00 & -2.95    & 0.0033**            \\ \hline
\multirow{4}{*}{\textit{Emotion}}           
& $\beta_0$       &  4.60    & 1.14  & 298.29  &  4.05   & <0.0001***              \\ \cline{2-7} 
& $\beta_1$       & -0.08    & 1.61  & 300.36  & -0.05   & 0.9580           \\ \cline{2-7} 
& $\beta_2$       & -0.25    & 0.13  & 1082.17 & -1.85   & 0.0640            \\ \cline{2-7} 
& $\beta_{3}$     &  0.96    & 0.19  & 1084.98 &  5.09   & <0.0001***            \\ \hline
\multirow{4}{*}{\textit{Well-being}}           
& $\beta_0$       & 74.54    & 1.07 &  241.65  &  69.73   & <0.0001***              \\ \cline{2-7} 
& $\beta_1$       &  0.09    & 1.51 &  242.67  &   0.06   & 0.9545           \\ \cline{2-7} 
& $\beta_2$       & -0.18    & 0.09 & 1081.12  &  -1.95   & 0.0515            \\ \cline{2-7} 
& $\beta_{3}$     &  0.45    & 0.13 & 1082.87  &   3.41   & 0.0007***            \\ \hline
\end{tabular}
\label{table:user_study}
\end{table}

\begin{table}[t]
\caption{Results comparing the over-time changes in outcomes of the intervention (GD) and control groups}
\footnotesize
\begin{tabular}{|p{0.5in}|l|r|r|r|r|l|}
\hline
\scriptsize
 &                   & Estimates & \begin{tabular}[c]{@{}r@{}}Std errors\end{tabular} & df     & $t$-values &  $p$-values \\ \hline
\footnotesize
\multirow{4}{*}{\textit{Anxiety}}        
& $\beta_0$       & 13.64    & 1.29 & 293.57  & 10.61    & <0.0001***            \\ \cline{2-7} 
& $\beta_1$       & -0.48    & 1.88 & 307.70  & -0.26    & 0.7990                 \\ \cline{2-7} 
& $\beta_2$       & -0.09    & 0.16 & 972.57  & -0.56    & 0.5780           \\ \cline{2-7} 
& $\beta_{3}$     & -0.29    & 0.24 & 981.08  & -1.23    & 0.2190           \\ \hline
\multirow{4}{*}{\textit{Stress}} 
& $\beta_0$       & 14.84    & 0.32 & 312.83  & 45.71    & <0.0001***              \\ \cline{2-7} 
& $\beta_1$       &  0.09    & 0.48 & 329.18  &  0.20    & 0.8440           \\ \cline{2-7} 
& $\beta_2$       &  0.07    & 0.04 & 973.09  &  1.66    & 0.0970           \\ \cline{2-7} 
& $\beta_{3}$     &  0.07    & 0.06 & 982.44  &  1.16    & 0.2450            \\ \hline
\multirow{4}{*}{\textit{Emotion}}           
& $\beta_0$       &  4.60    & 1.19  & 267.38  &  3.84   & 0.0002***              \\ \cline{2-7} 
& $\beta_1$       &  0.36    & 1.75  & 278.62  &  0.20   & 0.8394           \\ \cline{2-7} 
& $\beta_2$       & -0.25    & 0.13  & 969.85  & -1.85   & 0.0648            \\ \cline{2-7} 
& $\beta_{3}$     &  0.38    & 0.20  & 977.27  &  1.86   & 0.0638            \\ \hline
\multirow{4}{*}{\textit{Well-being}}           
& $\beta_0$       & 74.54    & 1.13 &  228.79  &  65.74   & <0.0001***              \\ \cline{2-7} 
& $\beta_1$       & -0.35    & 1.65 &  234.92  &  -0.21   & 0.8322           \\ \cline{2-7} 
& $\beta_2$       & -0.18    & 0.10 &  970.51  &  -1.86   & 0.0636            \\ \cline{2-7} 
& $\beta_{3}$     &  0.25    & 0.15 &  975.22  &   1.67   & 0.0946            \\ \hline
\end{tabular}
\label{table:baseline_gd}
\end{table}

 \begin{table}[t]
\caption{Results comparing the over-time changes in outcomes of the intervention (BUM) and control groups}
\footnotesize
\begin{tabular}{|p{0.5in}|l|r|r|r|r|l|}
\hline
\scriptsize
 &                   & Estimates & \begin{tabular}[c]{@{}r@{}}Std errors\end{tabular} & df     & $t$-values &  $p$-values \\ \hline
\footnotesize
\multirow{4}{*}{\textit{Anxiety}}        
& $\beta_0$       & 13.63    & 1.21 &  327.22  & 11.24    & <0.0001***            \\ \cline{2-7} 
& $\beta_1$       & -3.08    & 1.73 &  333.55  & -1.79    & 0.0752                 \\ \cline{2-7} 
& $\beta_2$       & -0.09    & 0.16 & 1039.23  & -0.55    & 0.5827           \\ \cline{2-7} 
& $\beta_{3}$     & -0.26    & 0.23 & 1049.23  & -1.12    & 0.2615           \\ \hline
\multirow{4}{*}{\textit{Stress}} 
& $\beta_0$       & 14.84    & 0.33 &  342.38  & 45.46    & <0.0001***              \\ \cline{2-7} 
& $\beta_1$       &  0.86    & 0.46 &  349.29  &  1.85    & 0.0654           \\ \cline{2-7} 
& $\beta_2$       &  0.07    & 0.04 & 1040.66  &  1.60    & 0.1105           \\ \cline{2-7} 
& $\beta_{3}$     & -0.05    & 0.06 & 1051.23  & -0.76    & 0.4454            \\ \hline
\multirow{4}{*}{\textit{Emotion}}           
& $\beta_0$       &  4.60    & 1.11  &  299.82  &  4.15   & <0.0001***              \\ \cline{2-7} 
& $\beta_1$       &  2.42    & 1.58  &  304.93  &  1.53   & 0.1262           \\ \cline{2-7} 
& $\beta_2$       & -0.25    & 0.13  & 1039.63  & -1.87   & 0.0613            \\ \cline{2-7} 
& $\beta_{3}$     &  0.29    & 0.19  & 1048.12  &  1.52   & 0.1291            \\ \hline
\multirow{4}{*}{\textit{Well-being}}           
& $\beta_0$       & 74.54    & 1.08 &  240.07  &  69.00   & <0.0001***              \\ \cline{2-7} 
& $\beta_1$       &  1.00    & 1.53 &  242.46  &   0.65   & 0.5161           \\ \cline{2-7} 
& $\beta_2$       & -0.18    & 0.09 & 1038.35  &  -1.94   & 0.0529            \\ \cline{2-7} 
& $\beta_{3}$     &  0.26    & 0.14 & 1043.01  &   1.93   & 0.0537            \\ \hline
\end{tabular}
\label{table:baseline_bum}
\end{table}

Lastly, inspection of these results by 11 clinical psychologists and professors\footref{fnlabel}, is carried out to observe the behavioral implications behind the scores. The psychologists and professors are asked to the following question in Likert scale: Is the network intervention help improve the participants' outcomes? Comparing their evaluation among intervention and control groups, the results indicate that 82\% of psychologists and professors agree that the recommendation of OISA is the most effective; while only 9\% of them agree for GD and BUM. 
The above results lead to consistent conclusion with the study --- the intervention is therapeutic and positive enough to be brought into clinical consideration. 


\section{Conclusion}
\label{sec:conclusion}

Even though research has suggested the use of network intervention for improving psychological outcomes, there is no effective planning tool for practitioners to select suitable intervention edges from a large number of candidate network characteristics. In this paper, we formulate NILD-S and NILD-M to address this practical need. We prove the NP-hardness and inapproximability, and propose effective algorithms for them. 
Experiments based on real datasets show that our algorithms outperform other baselines in terms of both efficiency and effectiveness; empirical results further attest to the practical feasibility and utility of using the OISA algorithm for real-world intervention purposes.

\begin{acks}
This work was supported in part by the National Science Foundation (IIS-1717084, SMA-1360205, IGE-180687, SES-1357666), the Ministry of Science and Technology (107-2221-E-001-011-MY3, 108-2221-E-001-002, 108-2636-E-007-009-), the National Institutes of \\ Health (U24AA027684), the Penn State Quantitative Social Sciences Initiative, and the the National Center for Advancing Translational Sciences (UL TR000127).
\end{acks}



\appendix

\section{Appendices}

\subsection{Proof of Theorem \ref{thm:crpd_optimal}}
\label{proof:crpd_optimal}

Without loss of generality, we assume that nodes are numbered in the ascending order of their weights, i.e., $f(v_i) \geq f(v_j)$ if $i \geq j$. In the following, we first introduce the properties of threshold graphs regarding LCC in Corollary~\ref{lemma:threshold_graph_basic} and Lemma~\ref{lemma:lcc_update}, and then discusses the feasibility of the solution returned by CRPD in Lemma~\ref{lemma:crpd_feasible}.

\begin{corollary}
Given a threshold graph $G=(V,E)$ with the node weights $f(\cdot)$ and threshold $t_G$, the nodes (with ID 1, 2,...., $n$)are separated into three sets, $V_Z=\{i| 1 \leq i \leq z\}$, $V_D=\{i| z+1 \leq i \leq c\}$, and  $V_C = \{i| c+1 \leq i \leq n\}$, where $z$ is the largest node satisfying $f(z) + f(n) \leq t_G$, and $c$ is the node satisfying $f(c-1) + f(c) \leq t_G$ and $f(c) + f(c+1) > t_G$. Then, 1) any node $v \in V_Z$ has no incident edge. 2) For any two nodes $u$ and $v$ in $V_D$, $(u, v) \notin E$. 3) For any two nodes $u$ and $v$ in $V_C$, $(u, v) \in E$.
\label{lemma:threshold_graph_basic}
\end{corollary}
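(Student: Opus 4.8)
\textbf{Proof plan for Corollary~\ref{lemma:threshold_graph_basic}.} The plan is to translate each of the three claims directly into an inequality on the weight function $f(\cdot)$ and invoke the defining property of a threshold graph (Definition~\ref{def:threshold_graph}), namely that $(i,j)\in E$ if and only if $f(i)+f(j) > t_G$. The key observation underlying all three parts is monotonicity: since the nodes are indexed so that $f(v_i)\geq f(v_j)$ whenever $i\geq j$, an inequality involving $f$ at a pair of indices propagates to all pairs that are ``at least as large'' or ``at most as large'' coordinate-wise.

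First I would prove part (1). Fix any $v\in V_Z$, so $1\leq v\leq z$, and let $u$ be an arbitrary node. Since $f(u)\leq f(n)$ by the ordering, and $f(v)\leq f(z)$ because $v\leq z$, we get $f(v)+f(u)\leq f(z)+f(n)\leq t_G$, where the last inequality is the defining property of $z$ in the statement. By Definition~\ref{def:threshold_graph}, $(v,u)\notin E$; since $u$ was arbitrary, $v$ has no incident edge. Next, part (2): take $u,v\in V_D$ with $u<v$, so both lie in $\{z+1,\dots,c\}$, hence $f(u)\leq f(c-1)$ and $f(v)\leq f(c)$. Therefore $f(u)+f(v)\leq f(c-1)+f(c)\leq t_G$ by the hypothesis on $c$, and again Definition~\ref{def:threshold_graph} gives $(u,v)\notin E$. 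Finally part (3): take $u,v\in V_C$ with $u<v$, so both lie in $\{c+1,\dots,n\}$, hence $f(u)\geq f(c+1)$ and $f(v)\geq f(c+1)$; since $v\neq u$ and $v\geq c+1>c$ while $u\geq c+1$, we have $f(u)+f(v)\geq f(c)+f(c+1)>t_G$ by the hypothesis on $c$, so $(u,v)\in E$.

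There is one bookkeeping point I would be careful about: the sets $V_Z$, $V_D$, $V_C$ as defined must actually partition $\{1,\dots,n\}$, i.e., one must check that $z\leq c$ and that $c$ is well-defined (the thresholds $z$ and $c$ exist for any threshold graph, possibly with some of the three sets empty — e.g.\ if the graph is complete then $V_Z=V_D=\emptyset$, and if it is edgeless then $V_C=\emptyset$). I would note these degenerate cases explicitly so that the later lemmas which case-split on a node's membership in $V_Z$, $V_D$, or $V_C$ are justified, but the inequalities above hold vacuously whenever the relevant set is empty, so no separate argument is needed.

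The main obstacle here is essentially cosmetic rather than mathematical: the statement as written fixes the pivot indices $z$ and $c$ via slightly delicate boundary conditions ($f(z)+f(n)\leq t_G$ with $z$ \emph{largest} such, and the sandwich $f(c-1)+f(c)\leq t_G < f(c)+f(c+1)$), and one must make sure these definitions are consistent with each other and with the ordering — in particular that ``$z$ largest'' forces $f(z+1)+f(n) > t_G$, which is what makes every node in $V_D\cup V_C$ have at least one neighbor, and that the two-sided condition on $c$ pins down a unique index. Once these definitional checks are dispatched, the three claims follow immediately from monotonicity of $f$ and the threshold property, with no real computation involved.
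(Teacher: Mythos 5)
Your proof is correct and takes the only natural route: the paper itself gives no proof of Corollary~\ref{lemma:threshold_graph_basic}, treating it as immediate from Definition~\ref{def:threshold_graph} together with the convention (stated at the start of Appendix~\ref{proof:crpd_optimal}) that node IDs are ordered by nondecreasing weight, and your argument is exactly that monotonicity computation spelled out. Your bookkeeping remarks (uniqueness of $c$, $z\leq c$, and the possibly empty degenerate sets) are sensible additions that the paper silently assumes.
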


\begin{lemma}
After $(i,j)$ is added to $G=(V,E)$, the LCC of every $v \in (V \backslash \{i, j\}) \backslash (N_G(i) \cap N_G(j))$ remains the same.
\label{lemma:lcc_update}
\end{lemma}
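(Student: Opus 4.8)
The statement asserts that adding a single edge $(i,j)$ to $G$ changes the LCC only of $i$, $j$, and their \emph{common} neighbors. The plan is to analyze, for an arbitrary node $v$ with $v \notin \{i,j\}$ and $v \notin N_G(i) \cap N_G(j)$, the two quantities in the definition of $LCC_G(v)$ (Equation~\ref{eq:lcc}): the degree $d_G(v)$ (which controls the denominator $C(d_G(v),2)$) and the edge count $|\{(a,b) \mid a,b \in N_G(v),\ (a,b)\in E\}|$ in the numerator, and to show that neither changes when $(i,j)$ is inserted.

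First I would observe that the denominator depends only on $d_G(v) = |N_G(v)|$, and since the new edge $(i,j)$ is not incident to $v$ (as $v \neq i$ and $v \neq j$), the neighbor set $N_G(v)$ is unchanged; hence $C(d_G(v),2)$ is unchanged. Second, and this is the crux, I would show the numerator is unchanged: the numerator counts edges \emph{within} $N_G(v)$, so the only way it can change is if the newly added edge $(i,j)$ has \emph{both} endpoints in $N_G(v)$, i.e.\ if $i \in N_G(v)$ \emph{and} $j \in N_G(v)$. But $i \in N_G(v)$ is equivalent to $v \in N_G(i)$, and $j \in N_G(v)$ is equivalent to $v \in N_G(j)$; together these say $v \in N_G(i) \cap N_G(j)$, which contradicts our hypothesis on $v$. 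Therefore $(i,j)$ does not lie inside $N_G(v)$, the numerator is unchanged, and so $LCC_{\overbar{G}}(v) = LCC_G(v)$.

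I do not anticipate a serious obstacle here — the lemma is essentially a direct unpacking of the LCC definition. The one point requiring a little care is the case $d_G(v) \le 1$, where $C(d_G(v),2) = 0$ and the LCC is conventionally undefined or set to $0$; I would handle this by noting that such a $v$ has at most one neighbor, so it cannot have both $i$ and $j$ as neighbors in either $G$ or $\overbar{G}$, and its degree is unaffected, so whatever convention is used, the value is unchanged. The remaining work is just assembling these observations into one or two sentences.
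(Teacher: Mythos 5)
Your proposal is correct and takes essentially the same approach as the paper: both argue that since $v$ is not a common neighbor of $i$ and $j$, the new edge $(i,j)$ does not lie inside $v$'s ego network, so neither the numerator nor the denominator of $LCC_G(v)$ changes (the paper merely phrases this as a contradiction with a two-case split on whether $v$ is adjacent to neither or exactly one of $i,j$). Your explicit handling of the degenerate case $d_G(v)\le 1$ is a small extra point of care not present in the paper.
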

\vspace{-2pt}
\begin{proof}
We also prove the lemma by contradiction. Assume the LCC of a $\overline{v} \in (V \backslash \{i, j\}) \backslash (N_G(i) \cap N_G(j))$ becomes different. Two possible cases exist. (1) $\overline{v}$ is not a neighbor of $i$ and $j$, i.e., $(\overline{v} , i) \notin E$ and $(\overline{v}, j) \notin E$. (2) $\overline{v}$ is the neighbor of either $i$ or $j$. Without loss of generality, we assume $(\overline{v} , i) \in E$ but $(\overline{v} ,j) \notin E$. For the first case, $i$ and $j$ are not in the subgraph induced by $\overline{v} $ and the neighbors of $\overline{v}$. Thus, adding $(i,j)$ does not change $\overline{v}$'s LCC. For the second case, $i$ is in the subgraph induced by $\overline{v}$ and its neighbors, but $j$ is not. Therefore, $(i,j)$ is not included in the induced subgraph. 
\end{proof}


\begin{lemma}
For $G=(V,E)$, if there exists at least one feasible solution following the constraints of NILD-S on $\tau$, $\omega_{b}$, $\omega_{c}$, and $\omega_{d}$, the solution obtained by CRPD is always feasible. 
\label{lemma:crpd_feasible}
\end{lemma}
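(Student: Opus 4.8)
\textbf{Proof proposal for Lemma~\ref{lemma:crpd_feasible}.}

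The plan is to show that whenever a feasible solution exists, the \textsc{Baseline} subroutine (and hence CRPD, which only replaces the Baseline output with something at least as good) always manages to add exactly $k$ edges incident to $t$ without ever violating the $\tau$ constraint, and that the MISS selection rule guarantees the $\omega_b$, $\omega_c$, $\omega_d$ constraints are met. First I would dispose of the degree, betweenness, and closeness constraints: since every edge added is incident to $t$, the final degree of $t$ is exactly $d_G(t) + k$, which is fixed and does not depend on \emph{which} nodes are chosen; so $\omega_d$ is satisfied iff any feasible solution satisfies it. For $\omega_b$ and $\omega_c$, I would argue that adding any edge incident to $t$ cannot decrease $t$'s betweenness or closeness (a new edge can only create new shortest paths through $t$ and can only shorten distances from $t$), so the monotonicity means that once $k$ edges are added the thresholds are met provided they are met by \emph{some} $k$-edge feasible solution; more carefully, I would invoke the threshold-graph structure (Corollary~\ref{lemma:threshold_graph_basic}) together with the MISS tie-breaking rule of Eq.~\ref{eq:miss} to show the particular nodes CRPD picks are at least as good for betweenness/closeness as the feasible witness.

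The core of the argument is the $\tau$ constraint and the claim that \textsc{Baseline} never gets \emph{stuck} — i.e., it always has enough valid candidates left in $C$ to reach $|F| = k$. Here I would exploit the special structure of threshold graphs. By Lemma~\ref{lemma:lcc_update}, adding $(t,u)$ only changes the LCCs of $u$, of $t$, and of common neighbors of $t$ and $u$; and by Corollary~\ref{lemma:threshold_graph_basic} the neighborhood structure is highly constrained (nodes in $V_C$ form a clique, nodes in $V_D$ form an independent set among themselves, nodes in $V_Z$ are isolated). The key combinatorial fact I would establish is: the set of nodes that \textsc{Baseline} removes from $C$ after selecting $u$ is a subset of $N_G(u)$, and because CRPD always selects the minimum-degree available $u$ (via the $d_G(u)$ term in Eq.~\ref{eq:miss}), the number of candidates removed at each step is minimized. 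I would then show by an exchange/greedy argument that if some feasible solution $F^\ast$ of size $k$ exists, then after CRPD's greedy choices the remaining candidate pool $C$ still contains enough nodes of $F^\ast$ (or structurally-equivalent substitutes, which exist thanks to the $V_D$/$V_C$ partition where nodes in the same class are interchangeable) to complete $F$ to size $k$ — so the \texttt{while} loop terminates with $|F| = k$.

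The main obstacle I anticipate is precisely this "never gets stuck" step: I need to rule out the scenario where CRPD's greedy prefix of choices is irrecoverable, i.e., no extension to $k$ edges respects $\tau$ even though a globally feasible $F^\ast$ exists. The delicate point is that CRPD commits to nodes one at a time and eliminates their $\tau$-conflicting neighbors permanently (except for the single RNR re-selection of $r_m$). In a general graph this greedy commitment could be fatal, but in a threshold graph the "conflict" relation among candidate nodes should be shown to have a nested/laminar structure — because $f(i)+f(j)>t_G$ is a threshold condition, the neighborhoods $N_G(v)$ are nested by weight, so the candidates form a chain under inclusion of their conflict sets. I would formalize this and conclude that the minimum-degree greedy choice is safe: picking the smallest-degree $u$ removes an \emph{inclusion-minimal} conflict set, so any feasible completion using a higher-degree node could be replaced by one using $u$. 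Establishing this nestedness rigorously, and handling the boundary nodes between $V_D$ and $V_C$ where a chosen node may raise a common neighbor's LCC, is where the real work lies; once it is in place, feasibility of CRPD's output follows by the exchange argument, and optimality (the rest of Theorem~\ref{thm:crpd_optimal}) will follow by a similar but sharper exchange on the objective value.
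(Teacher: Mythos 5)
Your strategy is essentially the paper's: an exchange argument against a feasible witness, driven by the defining property of threshold graphs that neighborhoods are nested by weight, with the $V_Z/V_D/V_C$ partition of Corollary~\ref{lemma:threshold_graph_basic} and the locality of LCC changes from Lemma~\ref{lemma:lcc_update} doing the structural work. You have correctly identified both the right tool and the genuinely delicate step (that the greedy prefix is never irrecoverable). However, your write-up defers exactly the parts that constitute the proof. The paper closes them as follows. It takes a feasible solution $V_{FS}$ sharing the most nodes with CRPD's output $V_F$, lets $u_{fs}$ and $u_f$ be the first nodes where they differ (so $u_f$ has smaller ID, hence smaller weight and $N_G(u_f)\subseteq N_G(u_{fs})$), and shows $V_{FS}\setminus\{u_{fs}\}\cup\{u_f\}$ is still feasible by a case analysis on which of $V_Z$, $V_D\cup V_C$ the two nodes lie in: degree is unaffected since both solutions add $k$ edges at $t$; betweenness and closeness are unchanged or improved because all shortest paths within $V_D\cup V_C$ have length at most $2$, while swapping into $V_Z$ only creates new shortest paths through $t$; and the $\tau$ constraint survives because $V_D$-nodes already have LCC $1$ (cannot increase), $V_Z$-nodes keep LCC $0$, and $V_C$-nodes are mutually adjacent so the edge count among their neighbors is invariant under the swap. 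Your proposed ``monotonicity of betweenness/closeness under adding edges at $t$'' is true but not sufficient on its own --- the bounds $\omega_b,\omega_c$ are thresholds met by \emph{some} feasible set, so you still need the witness comparison, which is precisely this case analysis. One small correction: Eq.~\ref{eq:miss} is an $\arg\max$ with a positive weight on $d_G(u)$, so it does not itself encode minimum-degree selection; the ascending-degree (equivalently ascending-weight) order that your nestedness argument needs comes from the baseline's smallest-degree rule, which is what the paper's proof actually invokes.
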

\begin{proof}
Assume that $G=(V,E)$ has at least one feasible solution $V_{FS}$, but the solution $V_F$ obtained by CRPD is different and infeasible. If there are multiple feasible solutions, let $V_{FS}$ be the one with the most common nodes with $V_F$. 
Let $u_{fs}$ and $u_f$ denote the first different node in $V_{FS}$ and $V_F$ when all nodes are sorted according to their IDs in the threshold graph, respectively. Note that the ID of $u_f$ is smaller than $u_{fs}$; otherwise CRPD would choose $u_{fs}$ in $V_F$. In the following, we prove that connecting $t$ to the nodes in $V_{FF} = V_{FS}\backslash\{u_{fs}\}\cup\{u_f\}$ leads to another feasible solution following the constraints on $\omega_{b}$, $\omega_{c}$, and $\omega_{d}$. 
1) Both $V_{FS}$ and $V_{FF}$ add $k$ new neighbors to $t$, and the degree of $t$ for $V_{FF}$ always exceeds $\omega_{d}$. 2) The betweenness of $t$ is the proportion of shortest paths among all node pairs passing through $t$. If both $u_{fs}$ and $u_f$ are in $V_D \cup V_C$, the betweenness of $t$ is the same in $V_{FS}$ and $V_{FF}$, because the length of shortest paths among all nodes in $V_D \cup V_C$ is at most 2, and the betweenness will not be improved by edges in $\{(t, v)| v \in V_{FS}\}$ or $\{(t, v)| v \in V_{FF}\}$. 
If both $u_{fs}$ and $u_f$ are in $V_Z$, the betweenness of $t$ is the same, because the number of node pairs $\langle u_{fs}, v \rangle$ with $v \in V_D \cup V_C \backslash \{t\}$ and the corresponding shortest paths not passing through $t$ after removing $(t, u_{fs})$ is identical to the number of node pairs $\langle u_f, v \rangle$ with $v \in V_D \cup V_C \backslash \{t\}$ and the corresponding shortest paths passing through $t$ after adding $(t, u_f)$.
 If $u_{fs} \in V_D \cup V_C$ and $u_f \in V_Z$, the betweenness of $t$ grows and becomes larger than $\omega_{b}$, because the shortest paths of node pairs $\langle u_f, v \rangle$ with $v \in V_D \cup V_C \backslash \{t\}$ pass through $t$. 
Therefore, the betweenness of $t$ in solution $V_{FF}$ is larger than $\omega_{b}$. 3) The closeness of $t$ is the sum of reciprocal of distances to other nodes. If both $u_{fs}$ and $u_f$ are in $V_D \cup V_C$ (or both $u_{fs}$ and $u_f$ are in $V_Z$), the closeness of $t$ is the same. If $u_{fs} \in V_D \cup V_C$ and $u_f \in V_Z$, the closeness of $t$ increases because the distance between $t$ and $u_f$ changes from 1 to 3, but the distance between $t$ and $u_f$ changes from $\infty$ to 1.
\end{proof}


Theorem~\ref{thm:crpd_optimal} proves that the above feasible solution is optimal.


\setcounter{theorem}{1}
\begin{theorem}
The CRPD algorithm can find the optimal solution of NILD-S when $G=(V,E)$ is a threshold graph, and the running time of CRPD is $O(|E|\hat{d} + k\hat{d}|V|)$.
\end{theorem}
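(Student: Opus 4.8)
The plan is to reduce the optimality claim to a monotone exchange argument on the layered structure of threshold graphs, and then to bound the running time by separating a one-time preprocessing of node statistics from the $k$ iterations of the candidate-elimination loop.

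\textbf{Optimality.} Since CRPD always adds exactly $k$ edges incident to $t$, the degree $d_{\overbar{G}}(t)=d_G(t)+k$ is the same for every feasible solution, so minimizing $LCC_{\overbar{G}}(t)$ is equivalent to minimizing the number of edges among $t$'s neighbors in $\overbar{G}$; this number is the (fixed) count of edges inside $N_G(t)$, plus the number of edges between $N_G(t)$ and the $k$ chosen nodes, plus the number of edges among the $k$ chosen nodes. Order the non-neighbors of $t$ by their threshold weights $f(\cdot)$. The first thing I would establish is the \emph{nested-neighborhood} property of threshold graphs: if $f(a)\le f(b)$ then $N_G(a)\setminus\{b\}\subseteq N_G(b)\setminus\{a\}$, which follows directly from Definition~\ref{def:threshold_graph}. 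Consequently, if a candidate solution connects $t$ to a node $b$ while a $\tau$-admissible non-neighbor $a$ of smaller weight is still available, swapping $b$ for $a$ does not increase either of the two non-constant terms above (every node that $a$ is adjacent to, among the relevant sets, $b$ is adjacent to as well), and by the same monotonicity the $\tau$-increment $|S\cap N_G(w)|$ of each old neighbor $w$ of $t$ cannot increase; Lemma~\ref{lemma:crpd_feasible} already shows such a downward swap preserves the $\omega_b,\omega_c,\omega_d$ constraints. Hence among all feasible sets there is an optimal one that, at each step, takes the smallest-weight admissible node --- which is exactly what the \textit{Baseline} function does, up to the single anomaly of Example~\ref{example:CPRD_motivation}, where committing to the minimum-degree node $u$ removes more candidates than a marginally larger node would. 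I would then argue that in a threshold graph the set $R$ of nodes removed at the first selection is a weight-contiguous block, so this anomaly can occur only at the very first choice; therefore re-running \textit{Baseline} with the alternative first commitment $(t,r_m)$ (the RNR step) and returning the better of the two runs recovers a globally optimal solution. Combined with the feasibility guarantee of Lemma~\ref{lemma:crpd_feasible} this yields Theorem~\ref{thm:crpd_optimal}.

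\textbf{Running time.} I would split the cost into preprocessing and the main loop. Computing $LCC_G(v)$ for all $v$ amounts to counting the edges inside each $N_G(v)$, which costs $O(d_G(v)\hat d)$ per node by scanning the adjacency list of each neighbor, so $O(\sum_v d_G(v)\hat d)=O(|E|\hat d)$ in total; the betweenness and closeness values needed by MISS (Eq.~\ref{eq:miss}) are also obtained within this bound because, by Corollary~\ref{lemma:threshold_graph_basic}, all pairwise distances in the non-isolated part of a threshold graph are at most $3$, so they reduce to simple neighborhood intersections. The algorithm then invokes \textit{Baseline} a constant number of times; each invocation runs $k$ iterations, and an iteration scans the $O(|V|)$ current candidates to pick $u$ via Eq.~\ref{eq:miss} and then re-checks the $\tau$-constraint, where Lemma~\ref{lemma:lcc_update} guarantees each candidate's check inspects only the at most $\hat d$ common neighbors it shares with $t$ --- giving $O(\hat d|V|)$ per iteration and $O(k\hat d|V|)$ over all iterations. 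Summing the two contributions yields $O(|E|\hat d + k\hat d|V|)$.

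\textbf{Main obstacle.} The delicate step is the RNR correctness claim: I must pin down \emph{exactly} which neighbors of the committed node get eliminated by the $\tau$-constraint in a threshold graph (using Lemma~\ref{lemma:lcc_update} together with the nested-neighborhood property) and show that a \emph{single} lookahead suffices --- i.e. that the greedy process cannot be trapped a second time in a later iteration, so no deeper search than replacing $u$ by $r_m$ is ever required. Formalizing that ``the removed set is a contiguous block and the trap is confined to the first step'' is where the proof will need the most care.
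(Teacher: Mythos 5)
Your core optimality argument --- order candidates by threshold weight, use the nested-neighborhood property $N_G(a)\setminus\{b\}\subseteq N_G(b)\setminus\{a\}$ for $f(a)\le f(b)$, and exchange the first node where a hypothetical better solution deviates from the greedy choice --- is exactly the paper's argument (its case 2, where it takes the optimal solution sharing the most nodes with CRPD's, identifies the first differing pair $u_i,u_i^*$, and swaps using $N_G(u_i)\subseteq N_G(u_i^*)$). The running-time decomposition also matches, modulo the fact that both you and the paper gloss over betweenness/closeness maintenance (the paper relegates it to a footnote citing incremental algorithms whose worst case exceeds the stated bound; your ``distances at most 3 so it reduces to neighborhood intersections'' is likewise not a real argument for betweenness, which counts path multiplicities).

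The genuine gap is the detour you yourself flag as the ``main obstacle.'' You conclude from the exchange argument that the smallest-weight-first greedy is optimal ``up to the single anomaly of Example~\ref{example:CPRD_motivation},'' and then make the theorem depend on an unproven structural claim (that the removed set $R$ is a weight-contiguous block and the trap is confined to the first iteration, so one RNR lookahead suffices). This conditions the proof on a statement you never establish --- and it is a phantom obstacle: if your exchange step is carried out fully, the anomaly of Example~\ref{example:CPRD_motivation} cannot arise in a threshold graph at all, because by nested neighborhoods the smallest-weight admissible candidate conflicts (via the $\tau$-constraint) with a subset of the candidates that any larger-weight choice would conflict with, so committing to it never forecloses a strictly better continuation. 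That is why the paper proves optimality of the \emph{baseline} greedy directly and never invokes RNR in the threshold-graph proof; RNR is there for general graphs and can only improve the returned solution. To close your proof you should delete the contiguous-block claim and instead (i) handle the paper's easy case where $|V_Z\cup V_D\setminus\{t\}|\ge k$ (all chosen nodes form an independent set, so only the denominator of Eq.~\ref{eq:lcc} grows), and (ii) in the exchange step verify the $\tau$-constraint not only for the old neighbors $w$ of $t$ but also for the swapped-in node $a$ itself and the other selected nodes --- the paper does this by observing that a node in $V_Z$ keeps LCC $0$, a node in $V_D$ has LCC $1$ which cannot increase, and for $a\in V_C$ the clique structure makes the edge count among its neighbors identical under either solution.
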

\setcounter{theorem}{4}
\vspace{-5pt}
\begin{proof}
We prove the optimality by exploring the following two cases: 1) $|V_Z \cup V_D \backslash \{t\}| \geq k$ and 2) $|V_Z \cup V_D \backslash \{t\}| < k$. 
For the first case, according to Definition~\ref{def:threshold_graph}, a node with a larger weight is more inclined to connect to others. According to Corollary~\ref{lemma:threshold_graph_basic}, $V_Z \cup V_D$ forms an independent set. Also, the nodes are ordered in the ascending order of the degrees. Thus, CRPD examines nodes in the ascending order of their weights and chooses all nodes in $V_Z \cup V_D$. Adding edges in $\{(t, v)| v \in V_Z \cup V_D\}$ will optimize the LCC of $t$ because of the following reasons. (a) Adding edges in $\{(t, v)| v \in V_Z \cup V_D\}$  only changes the LCCs of the nodes in $\{t\} \cup V_Z \cup V_D$ according to Lemma~\ref{lemma:lcc_update}. (b) For the nodes in $V_Z$, its degree is increased by 1 (i.e, connecting to $t$), but the LCC remains as 0 because $V_Z \cup V_D$ is an independent set. (c) The LCCs of nodes in $V_D$ is 1, and the LCCs of nodes in $V_D$ are impossible to be increased. (d) The LCC of $t$ will be optimized because adding edges in $\{(t, v)| v \in V_Z \cup V_D\}$ only increases the denominator in Equation~\ref{eq:lcc}, but it does not increase the numerator in Equation~\ref{eq:lcc}. Thus, if $|V_Z \cup V_D \backslash \{t\}| \geq k$, CRPD finds the optimal solution. 

Next, we prove the second case by contradiction. Assume that there is an optimal solution $F^*=\{(t, v)| v \in V^*_F\}$ better than the solution $F=\{(t, v)| v \in V_F\}$ obtained by CRPD. When there are multiple optimal solutions, let $F^*=\{(t, v)| v \in V^*_F\}$ be the one such that $V^*_F$ has the most common nodes with $V_F$. Let $V_F = \{u_1, u_2, ..., u_{i-1}, u_{i}...u_k\}$ and $V^*_F=\{u_1, ...,u_{i-1}, u^*_{i}, ..., u^*_k\}$, such that $u_{i}$ and $u^*_{i}$ are the first different node in $V_F$ and $V^*_F$ when all nodes are sorted according to their IDs. $u_{i}$ is smaller than $u^*_i$; otherwise CRPD would select $u^*_i$ in $V_F$. Moreover, $N_{G} (u_i)\subseteq N_{G} (u^*_i)$ according to Definition~\ref{def:threshold_graph}. We consider another solution $\{(t, v)| v \in V^{**}_F\}$ with $V^{**}_F = V^*_F\backslash\{u^*_{i}\}\cup\{u_{i}\}$. First, the solution $\{(t, v)| v \in V^{**}_F\}$ is a feasible solution due to the following reason. If $u_{i} \in V_Z \cup V_D \backslash \{t\}$, the LCC of $u_{i}$ after adding $\{(t, v)| v \in V^{**}_F\}$ satisfies the LCC degradation constraint $\tau$ since the LCC of $u_{i} \in V_D$ is 1 before adding edges (i.e., never increase), and the LCC of $u_{i} \in V_Z$ remains as 0 before and after adding edges. On the other hand, if $u_{i} \in V_C$, the LCC of $u_{i}$ after adding $\{(t, v)| v \in V^{**}_F\}$ is identical to the one after adding $\{(t, v)| v \in V^{*}_S\}$, because all nodes $u_{i+1}...u_k$ and $u^*_{i+1}, ..., u^*_k$ are fully connected, and the number of edges between $u_{i}$'s neighbors is thereby the same. Second, the LCC of $t$ after adding $\{(t, v)| v \in V^{**}_F\}$ is not larger than the LCC of $t$ in solution $\{(t, v)| v \in V^{*}_F\}$, as $N_{G} (u_i)\subseteq N_{G} (u^*_i)$. Thus, it contradicts that $\{(t, v)| v \in V^*_F\}$ is the optimal solution with the most common nodes with $\{(t, v)| v \in V_F\}$. 

In LCC Calculation Step, to find the number of edges between node neighbors, 
CRPD stores a count for each node. For each edge $(i,j) \in E$, it examines the neighbors of the terminal nodes $N_G(i)$ and $N_G(j)$ in $O(\hat{d})$ time, where $\hat{d}$ is the maximum degree in $G=(V,E)$. For every common neighbor of $i$ and $j$, CRPD increases its count by 1. After examining all $|E|$ edges in $O(|E|\hat{d})$ time, CRPD extracts $n_v$ for each node $v$. (2) With $n_v$, CRPD finds $LCC_G(v)$ of every node in $G=(V,E)$ in $O(|V|)$ time, and the total time complexity in LCC Calculation Step is $O(|E|\hat{d} + |V|)$. In Edge Selection Process Step, CRPD first examines each of $t$'s two-hop neighbors $v$ and excludes $v$ if adding $(t, v)$ increases the LCC of any $t$'s one-hop neighbor to more than $\tau$ in $O(\hat{d}|V|)$ time. Then, CRPD acquires an initial solution in $k$ iterations. In each iteration, CRPD first finds $u$ from the remaining candidates in $O(|V|)$ time and excludes any node $v$ if adding $(t, v)$ will increase another node's LCC to more than $\tau$ in $O(|V|\hat{d})$ time. The total time to find the initial solution is $O(k\hat{d}|V|)$.  After that, CRPD explores another solution starting from $r$ in $O(k\hat{d}|V|)$ time. In summary, CRPD requires $O(|E|\hat{d} + |V|)$ in LCC Calculation Step and $O(k\hat{d}|V|)$ in Edge Selection Processing Step, and the total running time is $O(|E|\hat{d} + k\hat{d}|V|)$.\footnote{We follow existing algorithms to obtain the betweenness and closeness in an incremental manner, and the  worst-case time complexity is $O(|V||E|)$ to update the betweenness and closeness of all nodes after adding an edge~\cite{kas2013incremental,kas2013incremental2}. \label{complexity}}
\end{proof}

\subsection{Proof of Lemma~\ref{lemma:edge_lowerbound}}
\label{proof:edge_lowerbound}
We prove the lemma by contradiction. Assume there exists a $k'_{t} < k_{t}$ such that the LCC of the node $t$ can be reduced to $l$ with only $k'_{t}$ edges. First, the LCC of $t$ becomes $\frac{n_{t} + n'_{t} }{C(d_{G}(t) + k'_{t},2)}$ after $k'_{t}$ edges are added and connected to $t$, where $n_{t}$ is the number of edges between $t$'s original neighbors before adding the new edges (i.e., $n_{t} = LCC_G(t)\times C(d_{G}(t), 2)$), and $n'_v$ is the number of edges between the original neighbors and the new neighbors or between any two new neighbors. Thus, it leads to a contradiction because
\small
\begin{equation}
\begin{aligned}
\frac{n_{t} + n'_{t} }{C(d_{G}(t) + k'_{t},2)} & \geq \frac{n_{t} }{C(d_{G}(t) + k'_{t},2)} \\
&= \frac{LCC_G(t)\times  d_{G}(t)(d_{G}(t)-1)}{(d_{G}(t) + k'_{t})(d_{G}(t)+k'_{t} -1)} 
> l,
\end{aligned}
\end{equation}
\normalsize
when $k'_{t} < k_{t}$. Note that $k_{t}=0$ if the LCC of $t$ is already smaller than $l$. Therefore, $0.5\times \sum_{t \in T} \max\{k_{t}, \omega_{d} - d_G(t)\}$ is a lower bound on $k_G$ for $l$ while satisfying the degree constraint $\omega_{d}$, since an edge connects two nodes.


\subsection{Proof of Theorem~\ref{thm:oisa_complexity}}
\label{proof:oisa_complexity}
Before OISA starts, for every $l_j$, $k_G$ can be derived by solving a quadratic equation in $O(|T|)$ time. OISA skips an $l_j$ if $k_G > k$. Otherwise, PONF iteratively 1) retrieves the largest LCC among all nodes in $T$ in $O(|T|)$ time, 2) calculates the optionality of the nodes with the largest LCC to find $m$ in $O(n_m(|V|+|E|))$ time (where $n_m$ is the number of nodes with the largest LCC), and 3) chooses the node $u$ with the largest LCC among all option nodes to extract $u$ in $O(|T|)$ time. 
After $(m, u)$ is selected, ALC updates the LCCs in $O(\hat{d})$ time since ALC examines the LCCs of $m$, $u$, and their common neighbors, and updates their LCCs if necessary. The updates for $m$ and $u$ require $O(\hat{d})$ time, and the updates for the common neighbors take $O(1)$ time. 
Thus, the total running time is $O(n_l\times(|V| + k \times (|T| + n_m \times (|V| + |E|)) + \hat{d}))= O(n_l \times k \times n_m \times (|V| + |E|))$, where $n_l$ is the number of targeted LCCs $l_j$ with $n_l=O(\hat{d}^2)$.\footref{complexity}

\bibliographystyle{ACM-Reference-Format}
\bibliographystyle{abbrv}
\bibliography{reference}

\end{document}